\newcommand{\diff}{{\rm d}}
\newcommand{\Exp}{\mathsf E}
\newcommand{\Prob}{{\mathsf P}}
\newtheorem{theorem}{Theorem}
\newtheorem{lemma}{Lemma} 
\newtheorem{profile}{Profile}
\begin{document}
\sloppy 
\title{ Sharing of Unlicensed Spectrum \\ by Strategic Operators \thanks{This work was supported in part by a gift from Futurewei Technologies
and the National Science Foundation under Grant Nos. ECCS-1231828 and CCF-1423040.}}
\author{
 Fei Teng, Dongning Guo and Michael L. Honig \\
Northwestern University, 2145 Sheridan road, Evanston, IL 60201
}
\author{\IEEEauthorblockN{Fei Teng, Dongning Guo, and Michael L. Honig}
\IEEEauthorblockA{Department of Electrical Engineering and Computer Science, Northwestern University\\ Evanston, IL 60208, USA
}}

\maketitle
\thispagestyle{plain}
\pagestyle{plain}

    \begin{abstract}

    Facing the challenge of meeting ever-increasing demand for wireless data, the industry is striving to exploit large swaths of spectrum which anyone can use for free without having to obtain a license. Major standards bodies are currently considering a proposal to retool and deploy Long Term Evolution (LTE) technologies in unlicensed bands below 6 GHz. This paper studies the fundamental questions of whether and how the unlicensed spectrum can be shared by intrinsically strategic operators without suffering from the tragedy of the commons. A class of general utility functions is considered. The spectrum sharing problem is formulated as a repeated game over a sequence of time slots. It is first shown that a simple static sharing scheme allows a given set of operators to reach a subgame perfect Nash equilibrium for mutually beneficial sharing. The question of how many operators will choose to enter the market is also addressed by studying an entry game. A sharing scheme which allows dynamic spectrum borrowing and lending between operators is then proposed to address time-varying traffic and proved to achieve perfect Bayesian equilibrium. Numerical results show that the proposed dynamic sharing scheme outperforms static sharing, which in turn achieves much higher revenue than uncoordinated full-spectrum sharing. Implications of the results to the standardization and deployment of LTE in unlicensed bands (LTE-U) are also discussed.
    \end{abstract}

\begin{IEEEkeywords}
Dynamic sharing, entry game, LTE-U, repeated game, unlicensed spectrum.
\end{IEEEkeywords}

    \section{Introduction}

    Hundreds of megahertz of unlicensed spectrum under 10 GHz is currently available and more will likely be allocated in the near future. Unlike licensed frequency bands, an unlicensed band is free to use by anyone as long as some basic constraints are satisfied. The constraints are usually on the transmit power spectral density (PSD). In a few bands in several regions in the world, an additional simple protocol (such as listen-before-talk) also needs to be followed.
    A lot of work on unlicensed spectrum focused on WiFi offloading of cellular data \cite{Pyattaev-2013,Dimatteo-2011,Zhang-2012,Zhou-2016}. In comparison with WiFi, Long Term Evolution (LTE) technology has benefits of high efficiency and robust mobility \cite{Qualcomm-2014, Nokia-2014}. Most operators and vendors believe that LTE in unlicensed spectrum (LTE-U) will seamlessly extend cellular networks and require no separate management as WiFi offloading would.

    Challenging coexistence issues arise with multiple LTE-U and WiFi operators.
    It is a consensus within the 3rd Generation Partnership Project (3GPP) that LTE-U should not disrupt concurrent WiFi services \cite{Nokia-2013,Intel-2014}. More importantly, since every LTE-U operator is incentivized to make the maximum use of the free spectrum, without an effective scheme for cooperation, many operators are likely to suffer from significant interference, leading to severely degraded spectral efficiencies, also known as {\em the tragedy of the commons}~\cite{commons-1968}. Two fundamental questions are addressed in this work: 1) Can intrinsically selfish and strategic operators cooperate for their mutual benefit? and 2) if so, how should strategic operators with dynamic traffic cooperate?

    Given the sophisticated nature of wireless operators, it is natural to model them as strategic players and cast the spectrum sharing problem in the framework of game theory and mechanism design.
    There have been some game-theoretic studies of spectrum sharing among non-cooperative parties (e.g.,~\cite{clemens2005intelligent, huang2006distributed, Huang-2006, Tse-2007, Xu-2014,Acharya-2007,Gandhi-2007,Cao-2005,Ji-2006,Kamal-2009, Singh-2015-july, Zhang-Cai-2015}). In particular, \cite{Tse-2007} laid the groundwork in a limited scenario where the number of operators is fixed, the utility function is the Shannon rate, and each operator is subject to a total transmit power constraint (in lieu of PSD constraints).
    The schemes in \cite{Tse-2007} and \cite{Xu-2014} also need each operator to measure the exact power spectral profile of every other operator, which is hard to implement in practice.
    The authors of \cite{Zhang-Cai-2015} devised a pricing mechanism for multiple operators to negotiate power usage in unlicensed spectrum.
    Spectrum sharing among operators having similar rights for accessing spectrum was studied in \cite{Singh-2015-july}, where an internal virtual currency in radio access network was used.

    In this paper, we study coexistence of multiple non-cooperative operators with time-varying traffic. Such traffic variations are likely to be quite pronounced in densely deployed small cells~\cite{Cavalcanti-2005}.
    A class of general utility functions is considered, with Shannon rate being a special case. We start with sharing schemes for a simple scenario where a given fixed set of operators are colocated and orthogonal spectrum sharing is preferred due to higher overall spectral efficiency.
    We first establish the effectiveness of a simple static sharing scheme where each operator's spectrum use does not vary with traffic levels, and show that the proposed profile is a subgame perfect Nash equilibrium (SPNE). Then, assuming operators arrive sequentially, we show that the number of operators willing to invest a fixed cost in order to share the spectrum is limited and depends on the investment cost and externalities. The study of static sharing is a straightforward extension of our prior work \cite{Static-2014}.

    With the total network revenue in mind, we then introduce a dynamic sharing scheme that adapts to the operators' traffic conditions.
    The sharing problem is formulated as a repeated game with private information and communication.
    We devise a dynamic sharing profile, where the operators share information about their traffic intensities, and in any slot, operators with low traffic intensities loan spectrum to those operators with high traffic intensities (with anticipation that borrowers will reciprocate).
    The proposed profile is shown to be a perfect Bayesian equilibrium with truthful reporting of traffic intensities. The practical implication is that all LTE-U operators are likely to enter such a mutually beneficial sharing agreement.

    This study aims to provide a theoretical foundation for LTE-U standardization and deployment. The proposed spectrum sharing schemes are simple and may serve as the basis for a practical design. Devices generally do not have intelligence and are not strategic, thus the choice of the actual equilibrium allocation is likely to be negotiated among the operators via standards bodies such as the 3GPP and WiFi Alliance. All operators must agree to the selected utility vector in advance and follow the profiles according to standards.\footnote{The allocations to different operators may or may not be equal. It is also conceivable to allow devices certain degrees of freedom to negotiate the utility vector.}
    A credible punishment scheme and/or a violation reporting mechanism is needed (and can be implemented in devices) to deter deviation from mutually beneficial coexistence. However, the punishment state will never be visited if all operators consistently comply with the proposed schemes.

    The remainder of the paper is organized as follows. The basic system model is presented in Section~\ref{System Model} and a static sharing scheme is proposed in Section~\ref{Static Sharing Schemes }. We study dynamic sharing schemes that allow operators to trade spectrum in Section~\ref{Dynamic Sharing Schemes}.
    Numerical results are shown in Section~\ref{Ch:unlicensed_static SE:Numerical Results}. Concluding remarks are given in Section~\ref{Conclusion}.

    \section{System Model} \label{System Model}
    Assume $n$ operators share a certain band or bands of unlicensed spectrum taken as a bounded real number set $\mathcal{S}\subset \mathcal{R}$, which is in general a union of some finite intervals. The total amount of spectrum is $W=\int_\mathcal{S} 1 \diff f$ Hz. On the timescale of interest, it is conveniently assumed that the spectrum in $\mathcal{S}$ is homogeneous. It is also assumed that the operators can monitor each other's PSD.\footnote{It could be implemented by using LTE CRS (cell reference signal), which embeds the cell identity and could separate the average received power of cells. Alternatively, an operator may infer about other operators' usage based on its own quality of service without active monitoring. This is not considered here.}

    We focus on a discrete-time formulation, where time is slotted and all operators are fully synchronized at the slot level.
    The traffic intensity of operator $i$ in slot $t$ is defined as the traffic level during slot $t$, and the intensity is revealed only to operator $i$ at the beginning of slot $t$.
    Each operator's traffic condition is private information.
    We assume that the traffic intensity of operator $i$ is an exogenous random process denoted by $\{\Lambda_t^i, t\geq 0\}$ independent of the other operators.

    Each operator determines its transmit PSD at the beginning of each slot based on its prior information, and maintains the same PSD over the entire slot.\footnote{In a more nuanced setting, the selected PSD is a mask that constrains the actual PSD which can vary over a slot.}
    For ease of notation, denote the transmit PSD of operator $i$ in slot $t$, normalized by the flat noise PSD, as $p^i_t(f)$.
    Throughout this paper, it is assumed that the only exogenous constraint on the PSD is regulatory, where the transmit PSD must be upper bounded by $P$, i.e., $p_t^i(f) \leq P$ for all $i$, $t$, and $f$.
    In general, the utility of operator $i$ is a function of the PSDs chosen by all operators, as well as its own traffic intensity $\lambda_t^i$, denoted as $u^i(p_t^i, p_t^{-i}, \lambda_t^i)$, where $p_t^{-i}$ denotes collectively the PSDs of operators other than $i$.
    The total revenue of operator $i$ over the infinite time horizon is defined as
    \begin{align} \label{eq:weighted_utility}
     V^i=(1-\delta)\sum_{t=0}^{\infty} \delta^{t} u^i(p_t^i, p_t^{-i}, \lambda_t^i)
    \end{align}
    where $\delta \in [0,1)$ denotes the discount of future utility, and the factor $(1-\delta)$ makes it convenient to compare the total revenue with a one-slot utility. In practice, an operator is concerned with the utilities over the course of many slots (at least days), hence the discount factor is typically very close to 1.

    It is common in practice that operators deploy their transmitters on the same tower or the towers close to each other. For simplicity, we assume for most of our discussion that the cells of all operators completely overlap, all transmitters are colocated, and all receivers are one unit distance away from the transmitters. The key principles here apply to more general scenarios with partially overlapped cells \cite{Thesis-2016}.
    For operator $i$, the signal-to-interference-and-noise ratio (SINR) in slot $t$ at frequency $f$ is expressed as
    \begin{align}
    \gamma_t^i(f)= \frac{ p^i_t(f)}{1+ \sum_{j\neq i} p^j_t(f)}.
    \end{align}
    In general, let the ``usefulness'' per Hz at the vicinity of a certain frequency $f$ be $r(\gamma^i(f))$, where $r(\cdot)$ is strictly increasing (hence increasing the SINR makes the frequency more useful).
    In this paper, we also assume that the system operates in the interference-limited regime. In particular, $r(\cdot)$ satisfies:
    \begin{align} \label{utility_property}
    r\left(P\right)> \sup_{n\geq 2} \left[ n\cdot r \left(\frac{P}{(n-1)P+1}\right) \right].
    \end{align}
    Condition \eqref{utility_property} implies that the sum utility that multiple operators extract by simultaneously sharing the same piece of spectrum with maximum transmit power is less than that can be extracted from exclusive use of the spectrum.
    A practically relevant example of $r(\cdot)$ is $r(\gamma)=\log(1+\gamma)$ so that $\int_\mathcal{S} r(\gamma_t^i(f))\diff f$ is the Shannon capacity of the additive white Gaussian noise channel. In that case, it is easy to show that if $P>1.62$, then \eqref{utility_property} is always satisfied.

    For concreteness, we introduce a family of utility functions that depend on the accumulated spectrum usefulness and the traffic intensity.
    Let the utility of operator $i$ in slot $t$ take the following form:
    \begin{align} \label{def_utility}
    u^i(p_t^i, p_t^{-i}, \lambda_t^i) = \pi^i\left( \frac1{r( P)} \int_\mathcal{S} r(\gamma_t^i(f))\diff f, \lambda_t^i\right)
    \end{align}
    which is bounded and non-negative.
    If the spectral efficiency on occupied spectrum is homogeneous, then the integral in \eqref{def_utility} can be simplified by using the bandwidths.
    Under orthogonal sharing, operator $i$ maximizes its utility by transmitting at peak power over its spectrum, yielding utility $\pi^i(w_t^i,\lambda_t^i)$, where $w_t^i$ is the bandwidth occupied by operator $i$ in slot $t$.
    Indeed, the first argument on the right side of \eqref{def_utility} can be viewed as the effective exclusive bandwidth occupied by operator $i$ that yields the same utility.

    In this paper, we consider a class of functions $\pi^i$ satisfying some additional conditions:
    \begin{itemize}
    \item[a)] $\pi^i(x, \lambda)$ is continuous, strictly increasing, and strictly concave in $x$ for every $\lambda$.
      As a consequence, the incremental utility of adding spectrum usefulness decreases as the initial amount of spectrum usefulness increases.
      Precisely, for every $x,\lambda$, and $\Delta>0$,
      \begin{align}\label{property_a}
        \pi^i(x+\Delta, \lambda)- \pi^i(x,\lambda)< \pi^i(x, \lambda)- \pi^i(x-\Delta, \lambda).
      \end{align}
    \item[b)]$\pi^i(x, \lambda)$ is finite and strictly supermodular, that is, adding an incremental amount of spectrum usefulness yields higher improvement in the utility when the traffic intensity is higher.  Precisely, for every $x$, $\xi > \lambda$, and $\Delta>0$,
      \begin{align} \label{property_b}
        \pi^i(x+\Delta, \lambda)-\pi^i(x, \lambda) < \pi^i(x+\Delta, \xi)-\pi^i(x, \xi).
      \end{align}
    \end{itemize}

    Without coordination, each operator would prefer to transmit over the entire spectrum using the maximum power, referred to as {\em full-spectrum strategy}.
    Clearly, if all operators employ the full-spectrum strategy (the strategy profile is referred to as {\em full-spectrum sharing}), they all suffer the maximum interference, which results in poor spectral efficiency. In particular, under full-spectrum sharing, the utility of operator $i$ in slot $t$ is
    \begin{align}
    \pi^i_f(\lambda_t^i)= \pi^i\left(\frac{W}{r(P)}r\left(\frac{P}{P(n-1)+1}\right),\lambda_t^i\right),
    \end{align}
     and the expected utility of operator $i$ is
    \begin{align} \label{eq:u_f}
     u_f^i = \Exp \left[\pi^i_f(\Lambda_t^i)\right].
    \end{align}
    Alternatively, if the operators avoid interfering with each other by using different parts of the spectrum, the spectral efficiency and the sum utility become much higher. This suggests that it may be beneficial for strategic operators to cooperate.

    This paper addresses the general situation where operators' traffic intensities vary over time.  Shorter slot duration leads to better tracking of traffic intensities, which results in higher revenues.
    However, due to the limitations of techniques in spectrum monitoring and information exchange among operators, the slot duration cannot be arbitrarily small. The slot duration should be chosen to balance the benefit of spectrum agility with the cost, robustness, and other practical issues.  Throughout this paper, a fixed slot duration is assumed, which is conceived to be between a few seconds to several minutes.

\section{Static Sharing Schemes } \label{Static Sharing Schemes }

    In this section, we study static sharing schemes, where the PSD of an operator does not vary with the traffic dynamics (even though the utilities do). The history of the realized actions (the chosen PSDs) provides full information for choosing the strategy in the current slot. In particular, the history does not affect the set of admissible actions or utilities in the future, and the strategy of an operator is actually independent of the traffic intensities. In this context, the spectrum sharing problem can be modeled as a repeated game with complete and perfect information over an infinite sequence of slots.
    We begin with the simplest case of two operators in Section~\ref{Static Sharing Schemes_2}, then generalize to the case of an arbitrary fixed number of operators in Section~\ref{Static Sharing Schemes_n}. In Section~\ref{Static Sharing Schemes_entry}, we study the entry game, where an arbitrary number of users arrive sequentially.

    The following simple observation is useful.
   \begin{lemma} \label{Orth_vs_Full}
   If \eqref{utility_property} holds, equal, orthogonal sharing achieves higher utility than full-spectrum sharing for all operators.
   \end{lemma}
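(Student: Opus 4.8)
The plan is to reduce the claimed inequality on utilities to a statement purely about the effective exclusive bandwidths delivered by the two schemes, and then invoke condition \eqref{utility_property} essentially verbatim. First I would write down the per-slot utility of operator $i$ under each scheme. Under orthogonal sharing with a uniform partition, operator $i$ is the sole user of a sub-band $\mathcal{S}_i\subset\mathcal{S}$ of measure $W/n$ and transmits at peak power, so it experiences no interference and its SINR equals $P$ throughout $\mathcal{S}_i$. Its effective exclusive bandwidth---the first argument of $\pi$ in \eqref{def_utility}---is therefore $\frac{1}{r(P)}\int_{\mathcal{S}_i} r(P)\diff f = W/n$, giving per-slot utility $\pi(W/n,\lambda_t^i)$. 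Under full-spectrum sharing every operator transmits at $P$ over all of $\mathcal{S}$, so the SINR is the constant $\frac{P}{(n-1)P+1}$ and the effective bandwidth is $\frac{W}{r(P)}\,r\!\left(\frac{P}{(n-1)P+1}\right)$, yielding exactly $\pi_f(\lambda_t^i)$ as defined above \eqref{eq:u_f}.

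Next, since property a) guarantees that $\pi(x,\lambda)$ is strictly increasing in $x$ for every fixed $\lambda$, the comparison $\pi(W/n,\lambda)>\pi_f(\lambda)$ holds for each $\lambda$ if and only if the first arguments satisfy
\begin{align}
\frac{W}{n} > \frac{W}{r(P)}\, r\!\left(\frac{P}{(n-1)P+1}\right).
\end{align}
Clearing the common factor $W>0$ and multiplying through by $r(P)>0$, this is equivalent to $r(P) > n\cdot r\!\left(\frac{P}{(n-1)P+1}\right)$, which is precisely what \eqref{utility_property} asserts for each fixed $n\geq 2$, since the supremum bound there implies the inequality for every individual $n$. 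This establishes the strict inequality for every traffic intensity $\lambda_t^i$, so the per-slot utility---and hence, via \eqref{eq:weighted_utility}, the total revenue---is strictly larger under orthogonal sharing for every operator.

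I do not anticipate a serious obstacle; the argument is essentially bookkeeping. The one point requiring care is the correct computation of the effective exclusive bandwidth under each scheme, in particular recognizing that the normalization by $r(P)$ built into \eqref{def_utility} makes the orthogonal effective bandwidth collapse exactly to the physical bandwidth $W/n$, which is what allows condition \eqref{utility_property}---introduced precisely to encode the interference-limited regime---to be applied after clearing common factors. I would also emphasize that the resulting inequality is uniform in $\lambda$, so the conclusion requires no assumption on the traffic distribution and indeed holds slot by slot.
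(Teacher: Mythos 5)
Your proposal is correct and follows essentially the same route as the paper: reduce the utility comparison to a comparison of the first arguments of $\pi$ via strict monotonicity in $x$, and then observe that $\frac{W}{n}r(P) > W\,r\bigl(\tfrac{P}{(n-1)P+1}\bigr)$ is exactly condition \eqref{utility_property} applied to each fixed $n\geq 2$. The extra bookkeeping you do in computing the effective exclusive bandwidths explicitly is consistent with, and slightly more detailed than, the paper's own argument.
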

   \begin{proof}
    Under equal, orthogonal sharing, the expected utility of operator $i$ in one slot is
    ${u_o^i = \Exp \left[\pi^i\left(\frac{W}{n}, \Lambda_t^i\right)\right]}$.
    By assumption~\eqref{utility_property},
    \begin{align}  \label{eq:2}
        W r\left( \frac{P}{1+ (n-1)P}\right) < \frac{W}{n} r\left(P\right).
    \end{align}
    Since $\pi^i(x, \lambda)$ is increasing in $x$ for every $\lambda$, we have $\pi^i\left(\frac{W}{n}, \lambda\right)> \pi^i\left(\frac{W}{r(P)}r\left(\frac{P}{P(n-1)+1}\right), \lambda\right) $ for every $\lambda$. Hence, $u_o^i$ is larger than $u_f^i$ given in \eqref{eq:u_f}.
    \end{proof}

    \subsection{The Two-Operator Case} \label{Static Sharing Schemes_2}

    First, consider a system with two operators who only operate for one slot with fixed PSDs.
    This is referred to as the stage game.
    The action space $\mathcal{P}$ is the set of feasible PSDs.
    A strategy profile $(p^1, p^2)\in \mathcal{P}\times \mathcal{P} $ is a strict Nash equilibrium of the stage game if an operator becomes worse off by unilaterally deviating from it, i.e.,
    $ u^1(p^1, p^{2},\lambda^1) > u^1(q, p^{2}, \lambda^1)$ and $u^2(p^2, p^{1},\lambda^2) > u^2(q, p^1,\lambda^2)$ for every $\lambda^1$, $\lambda^2$, and $ q\in  \mathcal{P}$.
    The minimax utility of an operator is defined as the smallest utility that the other operator can force it to receive, regardless of its strategy.

    \begin{lemma} \label{flat-psd}
    In the stage game, all operators using the full-spectrum strategy is the unique (strict) Nash equilibrium. In particular, $\pi^i_f(\lambda^i)$ is the minimax utility of operator $i$ with traffic~$\lambda^i$.
    \end{lemma}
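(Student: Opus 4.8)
The plan is to show that, no matter what the other operators do, transmitting at peak power $P$ across the entire band $\mathcal{S}$ is each operator's unique best response; in other words, the full-spectrum strategy is strictly dominant. Everything else then follows quickly. The heart of the argument is a pointwise monotonicity chain. Fix operator $i$ and the competitors' PSDs $p^{-i}$. At each frequency $f$, the ratio $\gamma^i(f) = p^i(f)/(1+\sum_{j\neq i} p^j(f))$ is strictly increasing in operator $i$'s own value $p^i(f)$, since the denominator is unaffected by $p^i(f)$. Because $r(\cdot)$ is strictly increasing, the integrand $r(\gamma^i(f))$ is maximized pointwise by the largest feasible value $p^i(f)=P$; hence $\int_\mathcal{S} r(\gamma^i(f))\diff f$ is maximized by the full-spectrum choice, and since $\pi(\cdot,\lambda^i)$ is strictly increasing in its first argument, so is the utility $u(p^i,p^{-i},\lambda^i)$. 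For strictness, if a competing strategy $q$ departs from peak power on a frequency set of positive Lebesgue measure, the accumulated usefulness is strictly smaller, so the utility is strictly smaller; thus full-spectrum is the \emph{unique} maximizer.

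From strict dominance both the existence of a strict Nash equilibrium and its uniqueness are immediate. Since full-spectrum is each operator's unique best response to every profile of the others, it is in particular the unique best response when all others transmit over the full spectrum; therefore the profile in which all operators use the full-spectrum strategy is a strict Nash equilibrium, as any unilateral deviation strictly lowers the deviator's utility. Conversely, any Nash equilibrium requires every operator to play a best response to the others, and that best response is full-spectrum regardless of the others' PSDs; hence the all-full-spectrum profile is the only candidate, which gives uniqueness.

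For the minimax claim, I would write the minimax utility as $\min_{p^{-i}} \max_{p^i} u(p^i,p^{-i},\lambda^i)$. The inner maximization is resolved by operator $i$'s best response above, yielding $\pi\left(\frac{1}{r(P)}\int_\mathcal{S} r(P/(1+\sum_{j\neq i} p^j(f)))\diff f, \lambda^i\right)$. The remaining minimization over $p^{-i}$ reduces to the same pointwise argument run in reverse: raising any interferer's PSD $p^j(f)$ shrinks $\gamma^i(f)$, hence shrinks the usefulness and the utility, so the competitors minimize operator $i$'s guaranteed utility by each transmitting at peak power $P$ across all of $\mathcal{S}$. Substituting $\sum_{j\neq i} p^j(f) = (n-1)P$ gives $\gamma^i(f) = P/(1+(n-1)P)$, integral $W\,r(P/(1+(n-1)P))$, and therefore the minimax value equals $\pi\left(\frac{W}{r(P)}r(P/(1+(n-1)P)),\lambda^i\right) = \pi_f(\lambda^i)$.

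The argument is conceptually simple, so the only point needing genuine care is strictness at the level of the integral: pointwise strict inequality of the integrands yields strict inequality of the integrals only when the deviation occurs on a set of positive measure, so I would phrase ``differs from full-spectrum'' as inequality up to a null set. A secondary point worth confirming is that both optimizations decouple across frequencies; this holds precisely because the only constraint $p^i(f)\le P$ is imposed per frequency, with no coupling such as a total-power budget that would otherwise force a joint allocation across $\mathcal{S}$.
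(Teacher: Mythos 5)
Your proposal is correct and follows essentially the same route as the paper, which simply observes that each operator maximizes its utility by using the highest PSD over the whole band regardless of the others' choices (strict dominance), so that the all-full-spectrum profile is the unique strict Nash equilibrium and the minimax value is attained when the others also transmit at peak power. Your write-up merely fills in the pointwise monotonicity details and the positive-measure caveat that the paper leaves implicit.
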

    The proof is trivial because an operator's utility increases as its PSD increases.

    In practice, the stage game is repeated an infinite number of times and each operator can vary its PSD over time.
    The players are the two operators; the action space $\mathcal{P}$ is the set of feasible PSDs; at the end of each epoch, the operators can observe the action of the other operator and can use the complete history of play to decide on future actions; the strategy space of an operator is the set of complete plans of actions that define what PSD the operator will use in every possible event where the operator needs to act; and the payoff is the expected total revenue over the infinite time horizon expressed in \eqref{eq:weighted_utility}.
    As was observed in~\cite{Tse-2007}, the repeated game allows a much richer set of Nash equilibria than the stage game.

    In this section, we consider profiles which constitute an SPNE. SPNE is a refined and stronger notion than Nash equilibrium \cite{Fundenberg-1986}, in the sense that such equilibria cannot be merely the consequence of non-credible threats, and thus are rationale and likely outcomes in practice.
    In particular, a strategy profile is an SPNE if the restriction of the strategy profile yields a Nash equilibrium from the start of each stage for each history~\cite{Fundenberg-1991}.
    In order to verify a profile as an SPNE, we can use the one-shot deviation principle, which addresses the equivalence between single-deviation optimality and strategy-deviation optimality.
    In particular, in an infinite-horizon multi-stage game with perfect information, a strategy profile is an SPNE if and only if no operator can gain by deviating from the profile in a single stage and conforming to the profile thereafter \cite[Theorem 4.2]{Fundenberg-1991}.

    \begin{lemma}  \label{folk-thm}
    Let $(u^1,u^2)$ be a feasible utility pair, i.e., there exists a pair of PSDs the two operators can adopt to attain $(u^1,u^2)$ as their utilities. If $(u^1, u^2)> (u_f^1, u_f^2)$, then there is an SPNE where the corresponding expected total revenue pair is $(u^1, u^2)$ as long as the future discount factor $\delta$ is sufficiently close to 1.
    \end{lemma}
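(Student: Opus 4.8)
The plan is to prove this Folk-type theorem by constructing an explicit grim-trigger (Nash-reversion) profile and verifying that it is a subgame perfect Nash equilibrium once $\delta$ is close to $1$. Let $(p^1_*, p^2_*)$ be a fixed pair of PSDs realizing the feasible utilities, so that $\expect{u(p^i_*, p^j_*, \Lambda^i)} = u^i$ for each operator $i$ (with $j \neq i$). The candidate profile is: on the equilibrium path each operator $i$ transmits $p^i_*$ in every slot; as soon as either operator is observed to deviate from its prescribed PSD, both operators switch permanently to the full-spectrum strategy. Since PSDs are monitorable by the system model, any deviation is detected at the end of the slot, so the trigger is well defined. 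On the equilibrium path the expected per-slot utility of operator $i$ is $u^i$, and by the normalization in \eqref{eq:weighted_utility} its expected total revenue is exactly $u^i$, which yields the claimed revenue pair.

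Next I would verify that no one-shot deviation is profitable. Fix operator $i$ and a slot $t$ in which, after observing its realized traffic $\lambda^i_t$, it contemplates deviating. By Lemma~\ref{flat-psd}, against the fixed PSD $p^j_*$ the best single-slot response is the full-spectrum strategy, yielding a one-slot gain
\begin{align}
 g^i(\lambda^i_t) = u\!\left(p_F, p^j_*, \lambda^i_t\right) - u\!\left(p^i_*, p^j_*, \lambda^i_t\right),
\end{align}
where $p_F$ denotes peak power over the whole band. After the deviation the punishment phase delivers expected per-slot utility $u_f^i$ forever. Comparing continuation revenues from slot $t$, staying on the path is at least as good as deviating precisely when
\begin{align}
 (1-\delta)\, g^i(\lambda^i_t) \le \delta\,\bigl(u^i - u_f^i\bigr).
\end{align}
Because $u(\cdot,\cdot,\cdot)$ is bounded, $\bar g^i = \sup_{\lambda} g^i(\lambda)$ is finite; and since $u^i > u_f^i$ by hypothesis, the right-hand side converges to the strictly positive constant $u^i - u_f^i$ while the left-hand side vanishes as $\delta \uparrow 1$. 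Hence the inequality holds for every traffic realization once $\delta \ge \bar g^i / \bigl(\bar g^i + u^i - u_f^i\bigr)$; taking the larger threshold over $i \in \{1,2\}$ gives a single $\delta$ that works for both operators.

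Finally I would upgrade ``no profitable one-shot deviation'' to subgame perfection. On-path, the one-shot deviation principle (valid here because payoffs are bounded and discounted) shows that ruling out a profitable single-slot deviation rules out any profitable multi-slot deviation. Off-path, every continuation coincides with both operators playing the full-spectrum strategy, which by Lemma~\ref{flat-psd} is a strict Nash equilibrium of the one-shot game and therefore a subgame perfect equilibrium of the continuation game; no operator can gain there either. Together these establish the profile as a subgame perfect Nash equilibrium with revenue pair $(u^1,u^2)$.

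The main obstacle is the private, randomly varying traffic: a deviator may condition its defection on its realized $\lambda^i_t$, so the incentive constraint must hold uniformly over all traffic realizations rather than merely in expectation. This is exactly why the argument routes through the worst-case gain $\bar g^i$, and why boundedness of $u$ is essential—it guarantees $\bar g^i < \infty$ so that a finite $\delta < 1$ suffices. A secondary convenience worth highlighting is that, by Lemma~\ref{flat-psd}, the stage-game Nash payoff coincides with the minimax value $u_f^i$ (see \eqref{eq:u_f}); this collapses the usual minimax-punishment construction into simple Nash reversion and makes the credibility check immediate.
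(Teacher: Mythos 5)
Your proposal is correct and follows essentially the same route as the paper: the paper's proof also constructs the grim-trigger (Nash-reversion) profile with permanent full-spectrum punishment and invokes the one-shot deviation principle, though it states the incentive verification only in one line. Your explicit calculation of the threshold $\delta \ge \bar g^i/\bigl(\bar g^i + u^i - u_f^i\bigr)$ using the boundedness of $u$, and your observation that the stage-game Nash payoff coincides with the minimax value $u_f^i$ so that Nash reversion is automatically credible, are exactly the details the paper leaves implicit.
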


    \begin{proof}
     We construct such an SPNE. Let the utility pair $(u^1,u^2)$ be attained by the PSD pair $(p^1,p^2)$, i.e., $\Exp \left[u^1(p^1,p^2,\Lambda_t^1)\right]=u^1$ and $\Exp \left[u^2(p^2,p^1,\Lambda_t^2)\right]=u^2$.

    \begin{profile} \label{mech_full_forever}
     The strategy of each operator $i\in\{1,2\}$ is: Use PSD $p^i$ in slot 0 and continue to use $p^i$ in each subsequent slot, as long as the other operator $i'$ ($i'\neq i$) uses PSD $p^{i'}$ in the previous slot; otherwise, transmit the maximum power over the entire spectrum hereafter.
    \end{profile}

    It is convenient to refer to the slots in which the operators use $(p^1,p^2)$ as the {\em cooperation state}, and the remaining slots (if any) as the {\em punishment state}.
     The operators begin with the cooperation state. Once an operator deviates from the cooperation state, the punishment is everlasting, so that the operator suffers a net loss in revenue. Moveover, transmitting the maximum power over the entire spectrum is the best response in the punishment state. Hence one-shot deviation is undesirable and Lemma \ref{folk-thm} holds.
    \end{proof}

    The cooperation defined in Profile~\ref{mech_full_forever} is not robust in practice. In case of any perceived deviation, even if due to false detection of the other operator's spectrum usage, the operators will be trapped in the punishment state.
    We next consider a profile that provides incentives for deviating operators to return to the cooperation state. The idea is to return to the cooperation state after spending sufficiently many slots in the punishment state.
    Denote
    \begin{align} \label{eq_upper_utility}
    \overline{U}^i(\lambda)=\sup_{p,\widehat{p}} u^i(p,\widehat{p}, \lambda).
    \end{align}
    Choose the duration of punishment $T$ such that for every $\lambda$ and $i=1,2$,
    \begin{align}\label{eq_punish}
    \overline{U}^i(\lambda)- \pi^i\left(\frac{W}{2},\lambda\right)< T(u^i-u_f^i),
    \end{align}
    which implies that one's loss due to punishment is greater than the one-shot gain by deviating if there is no future discount.
    To reduce the length of punishment, in practice we choose the minimum $T$ that satisfies \eqref{eq_punish}.
    \begin{profile} \label{mech_full_slot}
    Operator $i$ uses $p^i$ in slot 0.  In each subsequent slot, the system state evolution and the strategy of operator $i\in\{1,2\}$ depends on the state as follows:
    \begin{itemize}
    \item[I.] Cooperation state: If the PSDs were $(p^1, p^2)$ in the previous slot, then use $p^i$ in this slot; otherwise, transit to the punishment state.
    \item[II.] Punishment state: If any operator is detected to deviate from this profile in any state in the previous slot, then use the maximum power over the entire spectrum for $T$ slots and then resume using $p^i$ and return to the cooperation state.
    \end{itemize}
    \end{profile}

    \begin{lemma}  \label{folk-thm-desired}
    For any feasible vector $(u^1, u^2)> (u_f^1, u_f^2)$, Profile~\ref{mech_full_slot} is an SPNE that achieves the expected total revenue pair $(u^1, u^2)$, as long as the punishment duration $T$ satisfies \eqref{eq_punish} and the future discount factor $\delta$ is sufficiently close to 1.
    \end{lemma}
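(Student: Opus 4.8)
The plan is to verify subgame perfection via the one-shot deviation principle \cite[Theorem 4.2]{Fundenberg-1991}, exactly as in the proof of Lemma~\ref{folk-thm}, but now accounting for the finite punishment phase of Profile~\ref{mech_full_slot}. Since the profile prescribes a deterministic two-state machine, every history is either a cooperation-state history or a punishment-state history, so it suffices to show that at an arbitrary history of each type no operator can strictly raise its expected total revenue by a single-slot deviation followed by reversion to the prescribed strategy. Along the equilibrium path the system never leaves the cooperation state, so each operator earns expected per-slot utility $u^i$ in every slot, and the induced total revenue is $(1-\delta)\sum_{t\ge 0}\delta^t u^i = u^i$; this pins down the claimed revenue pair $(u^1,u^2)$ once equilibrium is confirmed.

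First I would dispose of the punishment state, which is the easy case. During punishment the continuation prescribed by Profile~\ref{mech_full_slot} does not depend on the current actions: the operators transmit over the full spectrum for the remaining punishment slots and then return to cooperation no matter what is played now. Hence a one-shot deviation in the punishment state can affect only the current stage payoff, and by Lemma~\ref{flat-psd} the full-spectrum strategy is the unique stage-game best response --- it maximizes the one-slot utility regardless of the opponent's choice, by \eqref{utility_property}. Therefore no deviation in the punishment state is profitable.

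The substantive step is the cooperation state. Consider operator $i$ contemplating a deviation in a cooperation slot with realized traffic $\lambda^i$. The largest immediate gain it can obtain is its best deviation payoff, at most $\overline{U}(\lambda^i)$, minus its cooperation payoff (at least $\pi(W/2,\lambda^i)$), hence at most $\overline{U}(\lambda^i)-\pi(W/2,\lambda^i)$. Such a deviation is detected and triggers $T$ punishment slots, during which the operator's expected per-slot utility drops from $u^i$ to $u_f^i$, while all slots after the punishment phase coincide with the non-deviation continuation. Factoring out the common discounting, the net change in expected total revenue from the one-shot deviation is bounded above by $[\overline{U}(\lambda^i)-\pi(W/2,\lambda^i)] - \sum_{k=1}^{T}\delta^{k}(u^i-u_f^i)$, so the deviation is unprofitable whenever $\overline{U}(\lambda^i)-\pi(W/2,\lambda^i)\le \sum_{k=1}^{T}\delta^{k}(u^i-u_f^i)$. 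As $\delta\uparrow 1$ the right-hand side increases to $T(u^i-u_f^i)$, and the strict inequality \eqref{eq_punish} --- which holds for every $\lambda$ and is precisely the undiscounted form of this condition --- guarantees that the required inequality is met for all $\delta$ sufficiently close to $1$.

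I expect the main obstacle to be making the discount-factor threshold uniform. The immediate deviation gain is a realized, traffic-dependent quantity, whereas the punishment loss is an expectation over future traffic, so the comparison must be argued slot-by-slot and then shown to hold simultaneously for every realization $\lambda$ and for both operators. Here the boundedness and non-negativity of $u$ assumed after \eqref{def_utility} (and hence finiteness of $\overline{U}$ in \eqref{eq_upper_utility}) are what make the finite sum $\sum_{k=1}^{T}\delta^{k}(u^i-u_f^i)$ converge to $T(u^i-u_f^i)$ uniformly, so that a single threshold $\delta^\ast<1$ works across all $\lambda$ and both $i$. Combining the two state cases through the one-shot deviation principle then yields that Profile~\ref{mech_full_slot} is a subgame perfect Nash equilibrium achieving the revenue pair $(u^1,u^2)$, as claimed in Lemma~\ref{folk-thm-desired}.
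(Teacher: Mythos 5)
Your proposal is correct and follows exactly the route the paper intends: the paper's own ``proof'' is a one-line citation to Theorem 2 of Fudenberg--Maskin, and your argument is the standard instantiation of that result via the one-shot deviation principle --- full-spectrum play is myopically optimal in the punishment state by Lemma~\ref{flat-psd}, and in the cooperation state the one-slot gain $\overline{U}(\lambda)-\pi(W/2,\lambda)$ is outweighed by the discounted punishment loss $\sum_{k=1}^{T}\delta^{k}(u^i-u_f^i)$, which tends to $T(u^i-u_f^i)$ as $\delta\uparrow 1$ so that \eqref{eq_punish} closes the argument. You in fact supply more detail than the paper does (including the uniformity of the discount threshold over $\lambda$ and $i$), so no further changes are needed.
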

   The proof is similar to Theorem 2 in \cite{Fundenberg-1986} for general repeated games.
   According to Lemmas~\ref{folk-thm} and~\ref{folk-thm-desired}, in order to achieve the total revenue vector $(u^1, u^2)> (u_f^1, u_f^2)$, the operators announce their strategies and agree to the selected utility vector in advance.

   In general, there are an infinite number of equilibria, and hence an infinite number of equilibrium utility tuples. In practice, the operators need to negotiate a favorable equilibrium allocation, usually through some standard bodies.
   Once all operators agree on the equilibrium, they monitor each other's PSDs and respond accordingly without calculating \eqref{def_utility}.
   Although an equilibrium allocation need not be orthogonal, an implication of Lemma~\ref{Orth_vs_Full} is that an orthogonal allocation is in general favorable. In addition, it is much easier to verify orthogonal sharing by monitoring the PSD support than the exact PSD of another operator. Furthermore, orthogonal sharing allows an operator to use its share of the spectrum as if it were licensed. Therefore, hereafter we consider orthogonal sharing. Without loss of generality, we also assume the chosen equilibrium corresponds to equal partition of spectrum by the operators, even if their one-shot utilities and traffic statistics are different. The key principles in this paper apply to general equilibrium allocations.

   Denote a set of PSDs corresponding to equal, orthogonal sharing as $(p_o^j)_{j=1}^n$ and let ${u_o^i = \Exp \left[u^i(p_o^i, p_o^{-i},\Lambda_t^i)\right]}$.
   When $n=2$, $(u_o^1, u_o^2)$ dominates $(u_f^1, u_f^2)$. From Lemma~\ref{folk-thm-desired}, the expected total revenue pair $(u_o^1, u_o^2)$ can be achieved by a Nash equilibrium in the repeated game, as depicted in Fig. \ref{fig:illustration_static}. According to the corresponding strategy, each operator transmits in an exclusive half of the unlicensed spectrum and has no incentive to deviate.

    \begin{figure}
    \centering
    \includegraphics[width=.8\columnwidth]{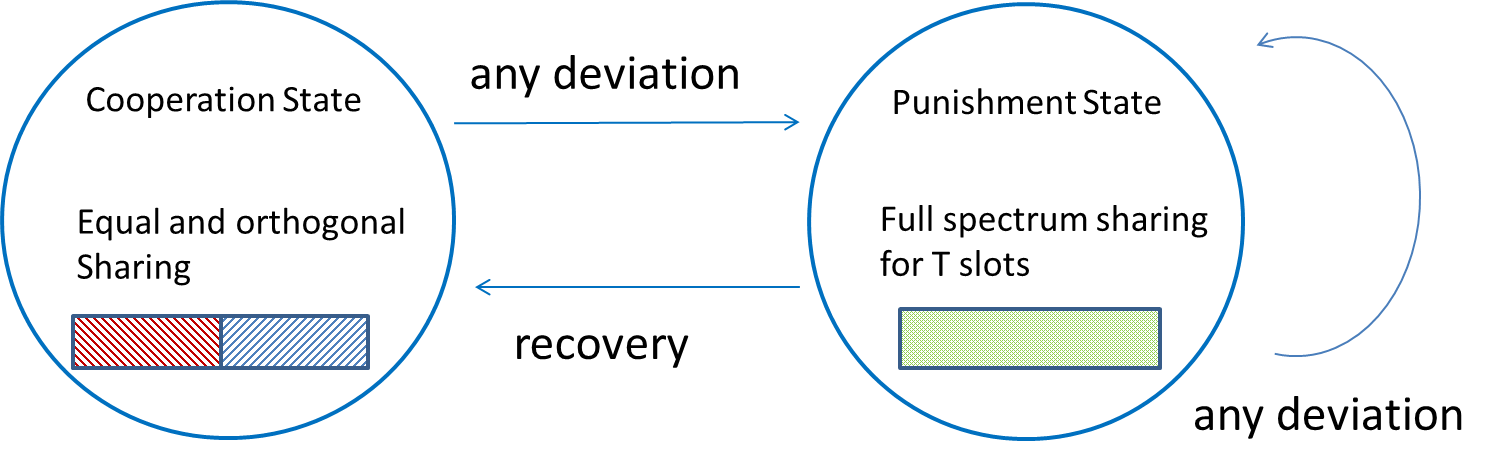}
    \caption{Static spectrum sharing with equal and orthogonal partitions of spectrum in the cooperation state.}
    \label{fig:illustration_static}
    \end{figure}

   \subsection{The $n$-Operator Case} \label{Static Sharing Schemes_n}
   The preceding sharing schemes can be easily extended to the case of $n$ operators.
   According to Lemma~\ref{Orth_vs_Full}, we have $u_o^i> u_f^i$.
   Define the maximum utility of operator $i$ with traffic intensity~$\lambda$ in one slot as $\overline{U}^i(\lambda)=\sup_{p,p^{-i}} u^i(p,p^{-i}, \lambda)$.
   We denote the equal-sharing bandwidth as
   \begin{align}
   w=\frac{W}{n}.
   \end{align}
   Then choose $T$ such that for every $i$ and $\lambda$,
    \begin{align}\label{eq_punish_n}
    \overline{U}^i(\lambda)-\pi^i\left(w, \lambda\right)
    < T\left(u_o^i-u_f^i\right).
    \end{align}
   Consider the following straightforward generalization of Profile~\ref{mech_full_slot}.
    \begin{profile} \label{mech_full_slot_n}
    Operator $i$ uses $p^i$ in slot 0.  In each subsequent slot, the system state evolution and the strategy of operator $i\in\{1,\ldots,n\}$ depends on the state as follows:
    \begin{itemize}
    \item[I.] Cooperation state: If the PSDs were $(p^1,\ldots, p^n)$ in the previous slot, then use $p^i$ in this slot; otherwise, transit to the punishment state.
    \item[II.] Punishment state: If any operator is detected to deviate from this profile in any state in the previous slot, then use the maximum power over the entire spectrum for $T$ slots and then resume using $p^i$ and return to the cooperation state.
    \end{itemize}
    \end{profile}
   \begin{lemma} \label{folk_thm_n}
   Profile~\ref{mech_full_slot_n} is an SPNE for $n$ operators as long as \eqref{utility_property} and \eqref{eq_punish_n} hold and the future discount factor $\delta$ is sufficiently close to 1.
   \end{lemma}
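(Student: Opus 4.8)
The plan is to verify the subgame perfect Nash equilibrium property by the one-shot deviation principle \cite[Theorem 4.2]{Fundenberg-1991}, exactly mirroring the two-operator argument behind Lemma~\ref{folk-thm-desired}, since Profile~\ref{mech_full_slot_n} is its direct generalization. Because the profile has only two states, it suffices to rule out a profitable single-slot deviation (followed by reversion to the profile) separately in the cooperation state and in the punishment state, for every operator $i$ and every realized traffic intensity $\lambda$.

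First I would dispose of the punishment state. There all operators transmit maximum power over the full spectrum, and by Lemma~\ref{flat-psd} the full-spectrum strategy is each operator's stage-game best response regardless of what the others do; equivalently, under \eqref{utility_property} it is dominant in the one-shot game. Since a deviation during punishment does not alter the state dynamics---the profile returns to cooperation after the fixed number $T$ of slots irrespective of the actions taken---any one-slot deviation can only lower the current utility while leaving the continuation unchanged. Hence no operator gains by deviating in the punishment state.

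The substance is in the cooperation state. Here operator $i$ with current traffic $\lambda$ earns $\pi(w,\lambda)$ under the profile, where $w=W/n$. Its most profitable one-slot deviation yields at most $\overline{U}(\lambda)=\sup_{p} u(p,p^{-i},\lambda)$ in the current slot, so the immediate gain is bounded by $\overline{U}(\lambda)-\pi(w,\lambda)$. This deviation triggers the punishment state for the next $T$ slots, during which the operator receives expected utility $u_f^i$ per slot rather than the cooperative $u_o^i$; thereafter the two continuation paths coincide and cancel. Writing the comparison in terms of the discounted revenue \eqref{eq:weighted_utility} (the common factor $(1-\delta)$ cancels), the deviation is unprofitable precisely when
\begin{align}
\overline{U}(\lambda)-\pi(w,\lambda) \le \sum_{k=1}^{T}\delta^{k}\left(u_o^i-u_f^i\right) = \delta\,\frac{1-\delta^{T}}{1-\delta}\left(u_o^i-u_f^i\right).
\end{align}

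The main obstacle is turning the undiscounted design condition \eqref{eq_punish_n} into this discounted inequality \emph{uniformly} in $\lambda$ and $i$. The key observations are that Lemma~\ref{Orth_vs_Full} gives $u_o^i-u_f^i>0$, that the right-hand coefficient $\delta(1-\delta^{T})/(1-\delta)\to T$ as $\delta\to1$ independently of $\lambda$ and $i$, and that the utility is bounded so $\sup_\lambda[\overline{U}(\lambda)-\pi(w,\lambda)]$ is finite. Since \eqref{eq_punish_n} holds with strict inequality for every $\lambda$ and $i$ at the fixed $T$, there is a positive gap between $\min_i T(u_o^i-u_f^i)$ and $\sup_\lambda[\overline{U}(\lambda)-\pi(w,\lambda)]$; by continuity of the coefficient in $\delta$ there is a threshold $\delta^\star<1$ such that the displayed inequality holds for all $\delta\in(\delta^\star,1)$, all operators $i$, and all realized $\lambda$. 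One point to handle carefully is the mixed timing: the deviation gain is evaluated at the realized current traffic while the punishment loss is an expectation over future traffic, but this causes no difficulty because \eqref{eq_punish_n} is imposed pointwise in $\lambda$ and the future losses enter only through the expected gap $u_o^i-u_f^i$. Combining the two cases, the one-shot deviation principle yields that Profile~\ref{mech_full_slot_n} is a subgame perfect Nash equilibrium.
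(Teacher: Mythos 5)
Your proof is correct and follows exactly the route the paper intends: the paper omits this proof, describing it only as ``a simple generalization of the 2-operator case'' (Lemma~\ref{folk-thm-desired}, itself deferred to the folk-theorem argument of Fudenberg--Maskin), and your one-shot-deviation check in each of the two states, with the gain bounded by $\overline{U}(\lambda)-\pi(w,\lambda)$ and the loss approaching $T(u_o^i-u_f^i)$ as $\delta\to 1$, is precisely that standard argument. The only point worth flagging is that passing from the pointwise strict inequality \eqref{eq_punish_n} to a $\lambda$-uniform threshold $\delta^\star$ implicitly needs $\sup_\lambda\bigl[\overline{U}(\lambda)-\pi(w,\lambda)\bigr]$ to remain strictly below $T\min_i(u_o^i-u_f^i)$, a gap the paper itself glosses over and which holds whenever $\lambda$ takes finitely many values or \eqref{eq_punish_n} is read uniformly.
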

   The proof is a simple generalization of the 2-operator case (Lemma \ref{folk-thm-desired}) and is omitted.

   Lemma~\ref{folk_thm_n} implies that the operators have no incentive to deviate from equal, orthogonal spectrum sharing. It is not difficult to show that if the operators agree on an unequal allocation, equilibria can also be achieved under Profile~\ref{mech_full_slot_n}.

   \subsection{Entry Problem} \label{Static Sharing Schemes_entry}

   The preceding discussions are based on the assumption that a given fixed number of operators share the spectrum in a given area.
   We now study the situation where an arbitrary number of strategic operators may arrive in a sequential manner. An operator must make an investment (e.g., on network infrastructure) before using any spectrum.
   For simplicity, we assume all operators have identical PSD constraints, utility functions, investment costs (assumed to be $c$) and traffic intensity distributions, so the superscript will be dropped in this subsection. The analysis can be easily extended to the case of heterogeneous operators.
   If $n$ operators have invested to share the spectrum, the expected utility of each operator in one slot is
   \begin{align}
    u_f(n)=\Exp \left[\pi\left(\frac{W}{r(P)}r\left(\frac{P}{P(n-1)+1}\right), \Lambda_t\right)\right]
   \end{align}
   under full-spectrum sharing or is
       \begin{align}
            u_o(n)=\Exp \left[\pi\left(w r(P), \Lambda_t\right)\right]
       \end{align}
   under equal and orthogonal sharing.
   The difference here is that an incumbent operator may change its action upon investment by a new operator.  It is natural for each incumbent operator to choose between two actions, `to punish' and `to cooperate', when a new operator begins to use any spectrum. If all operators use full spectrum to punish the new operator, then everyone achieves the utility of $u_f(n+1)$; if all cooperate with the new operator by using $\frac{1}{n+1}$ of the spectrum, then each achieves the utility of $u_o(n+1)$.

    According to Lemma~\ref{Orth_vs_Full}, we have $u_o(n)-u_f(n)\geq 0$ for every $n$. It is easy to verify that both $u_f(n)$ and $u_o(n)$ vanish as $n\rightarrow \infty$. Therefore, if $u_f(1)\geq c$, there exists $n^*$ such that $u_f(n^*+1)<c$ and $u_f(n^*)\geq c$.
    If $u_f(1)< c$, we let $n^*=0$. Evidently, larger $c$ results in smaller~$n^*$.
    Denote $\overline{U}(\lambda)$ as the maximum utility of an operator with traffic $\lambda$, which is achieved if the operator uses the full-spectrum strategy while other operators make no transmissions. Let $T(n)$ be such that for every $\lambda$,
    \begin{align}\label{eq_punish_entry}
    \overline{U}(\lambda) - \pi\left(w,\lambda\right) < T(n)\left(u_o(n)-u_f(n)\right).
    \end{align}
    \begin{profile} \label{mech_entry_game}
    The $i$-th operator to arrive does not invest if $i>n^*$.  If $i\le n^*$, the operator invests and performs the following in each slot thereafter:
    \begin{itemize}
      \item If there are $n\leq n^*$ active operators, use Profile~\ref{mech_full_slot_n} for $n$ operators.
      \item If there are more than $n^*$ operators, always use the full-spectrum strategy.
    \end{itemize}
    \end{profile}

    \begin{theorem}  \label{entry_game}
    Profile~\ref{mech_entry_game} is an SPNE, as long as \eqref{utility_property} and \eqref{eq_punish_entry} hold and the future discount factor~$\delta$ is sufficiently close to 1.
    \end{theorem}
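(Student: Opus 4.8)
The plan is to apply the one-shot deviation principle \cite[Theorem 4.2]{Fundenberg-1991}, which reduces the claim to showing that at no history can an operator strictly increase its payoff by a single deviation while conforming to Profile~\ref{mech_entry_game} thereafter. I would organize the histories into \emph{entry nodes}, at which a newly arrived operator decides whether to invest, and \emph{action nodes}, at which an active operator selects a PSD; the latter I would split further according to whether the number $n$ of currently active operators satisfies $n\le n^*$ or $n>n^*$. Because the investment rule in Profile~\ref{mech_entry_game} depends only on the arrival index $i$ and the PSD rule depends only on $n$, the prescription is well defined at off-path histories as well, which is exactly the generality the principle requires.

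I would first dispose of the action nodes, since these reduce to results already in hand. When $n\le n^*$ operators are active, the profile prescribes Profile~\ref{mech_full_slot_n} for $n$ operators, whose cooperation and punishment phases form a subgame perfect equilibrium by Lemma~\ref{folk_thm_n}, provided \eqref{utility_property} holds and the punishment length meets \eqref{eq_punish_entry} (which is precisely the $n$-operator instance of \eqref{eq_punish_n}); hence no single-slot PSD deviation pays. When $n>n^*$ operators are active, every operator transmits full spectrum, i.e., each slot is the one-shot game whose unique Nash equilibrium is the full-spectrum strategy by Lemma~\ref{flat-psd}. Since under \eqref{utility_property} full spectrum is a stage best response regardless of the others' PSDs, repeating it is self-enforcing at every such history. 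This last observation is the crux of the entry analysis: the threat used to discourage over-entry is merely repeated play of the stage Nash equilibrium, so it is automatically credible and needs no separate punishment construction.

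Next I would treat the entry nodes, which carry the new content. For the $i$-th arriving operator with $i>n^*$, the prescription is not to invest, giving payoff $0$. A one-slot deviation to invest pushes the active count above $n^*$, so by the action-node analysis the continuation is full spectrum forever, with per-slot utility $u_f(n)\le u_f(n^*+1)<c$; the deviation therefore yields net payoff $u_f(n)-c<0$, strictly below $0$, so abstaining is optimal. For $i\le n^*$ the prescription is to invest. On the continuation path the active count never exceeds $n^*$, so each slot yields orthogonal-sharing utility $u_o(n)\ge u_o(n^*)>u_f(n^*)\ge c$ by Lemma~\ref{Orth_vs_Full} and the definition of $n^*$; with the $(1-\delta)$ normalization of \eqref{eq:weighted_utility} the expected total revenue satisfies $V^i\ge u_o(n^*)>c$, so investing nets a strictly positive amount, while the one-slot deviation of not investing yields $0$. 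Thus abstaining is not profitable for $i\le n^*$ either, and the two cases together rule out profitable entry deviations.

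I expect the main obstacle to be the bookkeeping that links the one-time sunk cost $c$ to the normalized, discounted, and \emph{expected} revenue stream generated by a randomly and sequentially arriving population. Concretely, I must verify that along every continuation consistent with the profile the per-slot utility of an investing operator stays weakly above $u_o(n^*)$ for every realization of arrival times, which follows from the monotonicity of $u_o(\cdot)$ together with the cap $n\le n^*$, so that transient phases with fewer operators only raise revenue and $V^i\ge u_o(n^*)>c$ holds uniformly in $\delta$. Once this accounting is nailed down, the entry incentives hold for every $\delta$, and choosing $\delta$ close to $1$ (as already needed for Lemma~\ref{folk_thm_n}) secures the PSD-deviation deterrence in the $n\le n^*$ regime; the one-shot deviation principle then delivers the theorem.
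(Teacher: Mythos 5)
Your overall strategy---the one-shot deviation principle, the split of histories into entry nodes and action nodes, the reduction of the $n\le n^*$ action nodes to Lemma~\ref{folk_thm_n}, the observation that full-spectrum play is a stage-game best response (Lemma~\ref{flat-psd}) and hence self-enforcing when $n>n^*$, and the entry-node comparison $u_f(n^*+1)<c\le u_f(n^*)<u_o(n^*)$---is exactly the paper's. The one place where your reduction is too quick is the claim that a PSD deviation at an action node with $n\le n^*$ is deterred ``by Lemma~\ref{folk_thm_n}.'' That lemma concerns a \emph{fixed} population of $n$ operators, whereas in the entry game the population can grow during the $T(n)$ punishment slots following a deviation: an arriving operator with index $\le n^*$ invests and begins transmitting mid-punishment. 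This perturbs both sides of the deterrence inequality---the punishment-phase utility drops from $u_f(n)$ to $u_f(n+1)$, but so does the cooperation utility the deviator forgoes, from $u_o(n)$ to $u_o(n+1)$---so the fixed-$n$ condition \eqref{eq_punish_n} does not apply verbatim, and one must verify that the net loss from deviating remains positive for every possible arrival time within the punishment window. The paper treats precisely this case: the entrant arrives into the punishment state and transmits full spectrum, and (in the detailed version of the argument) the deviation gain as a function of the arrival time $\tau$ is monotone in $\tau$, so it suffices to check the two endpoint cases, each of which is negative by \eqref{eq_punish_entry} once $\delta$ is close to $1$. Your proof needs this additional case; everything else, including the uniform-in-$\delta$ accounting $V^i\ge u_o(n^*)>c$ for $i\le n^*$ and $u_f(n)-c<0$ for $i>n^*$ at the entry nodes, matches the paper's argument.
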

    \begin{proof}
    By using the one-shot deviation principle in \cite[Theorem 4.2]{Fundenberg-1991}, it suffices to show that if an operator deviates from Profile~\ref{mech_entry_game} in a single slot and then returns to conform to Profile~\ref{mech_entry_game}, the operator suffers a net loss.
    For incumbent operators, first consider the one-shot deviation in the cooperation state for $n\leq n^*$. Assuming an operator deviates at slot $t$ in the cooperation state and conforms afterwards, the punishment starts at slot $t+1$. If there is no new entrant during the punishment slots, then the utilities are identical to the case with a fixed number of operators. Hence the deviating operator's utility decreases according to Lemma~\ref{folk_thm_n}.
    On the other hand, if a new operator starts transmission before the punishment ends, the new operator transmits the maximum power over the entire spectrum, further reducing the payoff of the deviator.
    The same arguments apply to the case where multiple new operators arrive before returning to cooperation.
    Moreover, the preceding analysis applies to the situation where the new operator $i\leq n^*$ starts transmission but deviates by using a different PSD than required by the current state.

    Since the minimax utility for incumbent operator with traffic $\lambda$ is $\pi_f(\lambda)$, deviation in the punishment state only lengthens the punishment and postpones the larger utility. For new entrant~$i$, if $i\leq n^*$ (resp.,\ $i>n^*$), the total revenue is greater (resp., less) than the investment cost. Hence, (one stage) deviation from Profile~\ref{mech_entry_game}'s investment decision is not profitable. Therefore, Profile~\ref{mech_entry_game} is an SPNE.
    \end{proof}

    According to Theorem~\ref{entry_game}, under the proposed profile, there will be at most $n^*$ active operators in the market and all of operators obtain larger revenue than full-spectrum sharing.
    The proposed profile provides a way to share the spectrum efficiently when there are an indefinite number of strategic operators, thereby mitigating the effects of the tragedy of the commons.

    \section{Dynamic Sharing Schemes} \label{Dynamic Sharing Schemes}

    The static sharing strategies discussed in Section~\ref{Static Sharing Schemes } are in general not the most efficient under dynamic traffic conditions.  At any given time, some operators may have light traffic and hence excess spectrum, while others may have heavy traffic and hence experience a spectrum shortage.
    In this section, we allow the operators to trade spectrum and adapt their spectrum usage to the traffic conditions in each slot.
    To be consistent with the previous section, we assume that when no trade happens, operators agree to equal, orthogonal sharing. The main principles here apply to general unequal sharing.

    Spectrum trade can take place either with or without monetary payment. Spectrum sharing without monetary payment has a few advantages in some aspects. In comparison with monetary trading, sharing without monetary payment may suppress spectrum ``trolls''--operators who serve few or no customer, yet demand payment so as to not cause interference. Besides, sharing without monetary payment avoids pricing, metering, and billing for spectrum usage, which require efforts beyond the physical layer and can be costly. This study is hence restricted to spectrum trade without monetary payment. The basic idea is that an operator in need of extra spectrum may borrow from another operator but needs to return the spectrum in the future.

    A sharing scheme can be either direct or indirect. With a direct scheme, all operators report their traffic intensities, and each operator uses a designated strategy according to all reported traffic intensities. With an indirect scheme, each operator may report other signals (i.e., desired actions) instead of traffic intensities, and then determine the spectrum to utilize according to one's own traffic intensity and other operators¡¯ reports.
    By the Revelation Principle of Bayesian games \cite[Theorem 2]{Myerson-1979}, for every Nash equilibrium of an indirect scheme, there exists a direct scheme that is payoff-equivalent and in which truthful reporting is a Nash equilibrium. Thus, in this paper we focus on direct schemes where operators report their traffic intensities.

    The spectrum sharing problem with dynamic traffic is formulated as a repeated game with private information and communication.
    First, consider the stage game over a single slot. At the beginning of the game, the traffic intensity of operator $i$ is randomly generated. Each operator only knows its own traffic level. The operators report their traffic intensities (either truthfully or not) through a common collaborative channel at the beginning of the slot.
    After the operators receive all reported traffic intensities, each operator chooses a PSD for transmission. The one-shot utility for operator $i$ is defined the same as \eqref{def_utility}.
    An operator's strategy consists of its reported traffic intensity and its PSD for transmission over the slot.
    In analogy with Lemma~\ref{flat-psd}, it is easy to show that the Nash equilibrium of the stage game is full-spectrum sharing.

    In practice, the stage game with incomplete information is repeated infinitely.
    In each slot, the operators adjust their PSDs based on the reported traffic intensities and the history of realized actions.
    The strategy space of an operator is the set of complete plans of actions that define what traffic intensity the operator will report and what PSD the operator will use in every possible event, and the payoff is the expected total revenue in \eqref{eq:weighted_utility}.

    The preceding game is an infinite game with incomplete information, due to possible untruthful reporting.
    We shall show that the proposed strategy profiles are {\em perfect Bayesian equilibria}.
    Perfect Bayesian equilibrium is a stronger notion than SPNE because in addition to being subgame perfect, the equilibrium must be such that all operators expect other operators to continue to play according to their respective equilibrium strategies even after some operators deviate from the equilibrium path \cite{Fundenberg-1991}. In other words, the strategies must be credible in every step of the continuation game.
    Fudenberg and Tirole \cite{Fundenberg-PBE-1991} provided the leading formal definition of perfect Bayesian equilibrium, which applies to {\em finite} multi-stage games with observed actions and independent private information.
    Several variants of perfect Bayesian equilibrium for finite games ensued \cite{Battigalli-1996,Bonanno-2013,Conzalez-2014}.
    For infinite games, perfect Bayesian equilibrium was defined in~\cite{Watson-2016} by generalizing the reasonableness condition in finite games with observed actions in~\cite{Fundenberg-PBE-1991}.

    In search of an efficient perfect Bayesian equilibrium, we seek a direct scheme such that it is in the best interest of every operator to truthfully report their traffic for spectrum borrowing/lending.
    Denote the net spectrum balance of operator $i$ at the beginning of slot $t$ as $b_t^i$. The assumption $b_t^i > 0$ (resp., $b_t^i<0$) means that by the end of slot $t-1$ operator $i$ has lent more (resp., less) spectrum than borrowed. For simplicity, we set a balance constraint as $|b_t^i|\leq \overline{b}$ for all $i$ and $t$ to preclude infinite borrowing (e.g., in the manner of a Ponzi scheme).

    Throughout this section, we assume that in each slot, the traffic intensity of operator $i$ is either high $(\Lambda_t^i = 1)$ or low $(\Lambda_t^i = 0)$.
    It is further assumed that $(\Lambda_t^i)_{i,t}$ are independent and identically distributed for every $i$.
    The two-level traffic assumption is meaningful in practice where the sharing protocol needs to be made simple. The main principles developed here can be extended to more general traffic assumptions.

    \subsection{The Two-Operator Case} \label{Dynamic Sharing Schemes_2op}

    We begin with the two-operator case.
    Let $\Delta\in(0,w]$ be the amount of spectrum that the operator with high traffic intensity borrows from the operator with low traffic intensity in one slot.
    Consider the following strategy profile.

    \begin{profile} \label{mech_1}
    The strategies of the two operators mirror each other.
    Let the system start from the cooperation state with beginning balances $b^1_0=b^2_0=0$.
    Operator 1's strategy in slot $t$ depends on the state as follows:
    \begin{itemize}
    \item[I.] Cooperation state:
    \begin{itemize}
    \item[a.] Reveal its own traffic intensity $\lambda_t^1$, and learn $\lambda_t^2$ from operator 2.
    \item[b.] If $\lambda_t^1>\lambda_t^2$ and $b_t^1-\Delta>=-\bar{b}$, then use $w+\Delta$ Hz for transmission and reduce the balance to $b_{t+1}^1=b_t^1-\Delta$; if $\lambda_t^1<\lambda_t^2$ and $b_t^1+\Delta<=\bar{b}$, then use $w-\Delta$ Hz for transmission and increase the balance to $b_{t+1}^1=b_t^1+\Delta$;
        otherwise, let $b_{t+1}^1=b_t^1$ and use $w$ Hz for transmission.
    \end{itemize}
    \item[II.] Punishment state: If any operator is detected to deviate from this profile in any state in the previous slot, then use the maximum power over the entire spectrum for $T$ slots and then return to the cooperation state. The balances remain unchanged during the punishment state.
    \end{itemize}
    \end{profile}

    Because spectrum trade occurs only if there is sufficient balance, the balance $b_t^i$ remains within $[-\overline{b},\overline{b}]$ at all times.
    Without loss of generality, let $\overline{b}=k \Delta$ where $k$ is a positive integer.
    There are two types of deviations: undetectable deviation, i.e., lying about one's own traffic, and detectable deviation, i.e., using a different amount of spectrum than dictated by the profile.

    We shall show that Profile~\ref{mech_1} constitutes a perfect Bayesian equilibrium as defined in \cite{Watson-2016}.
    An operator's {\em information set} is a set that establishes all the possible moves that could have taken place in the game so far, given what the operator has observed \cite{Fundenberg-1991}. If the game has perfect information, every information set contains only one element, namely the point actually reached at that stage of the game. Otherwise, it is the case that some operators cannot be exactly sure about some unobserved variables in the game so far.  Denote $\mathcal{H}$ as the collection of all information sets.
    A perfect Bayesian equilibrium includes a strategy profile which describe actions at every information set, and a system of beliefs on strategy profiles which captures the operator's conjecture about both how each information set is reached and what will happen from the information set.
    The system of beliefs $\pmb{\mu}$ is defined as the collection of beliefs $\mu(h)$, for every $h\in \mathcal{H}$, where the belief $\mu(h)$ is a distribution on pure strategy profiles with support on the set of pure strategy profiles that reach information set $h$.

    Provided that Profile~\ref{mech_1} is used, the proper system of beliefs $\pmb{\mu}$ is constructed as follows: The beliefs at an information set are concentrated on the strategy profiles with the actions that have been observed and the prescribed actions at information sets that have not yet been observed.
    In particular, the belief on the profiles with the prescribed actions at unobserved information sets must be consistent with traffic intensity distribution.

    \begin{theorem} \label{thm_dynamic}
     There exist $\Delta>0$ and $T$ such that Profile~\ref{mech_1} along with the system of beliefs $\pmb{\mu}$ is a perfect Bayesian equilibrium, as long as \eqref{utility_property}, \eqref{property_a} and \eqref{property_b} hold, $\Prob(\Lambda_t^1=0,\Lambda_t^2=1)>0$, $\Prob(\Lambda_t^1=1,\Lambda_t^2=0)>0$, and the future discount factor $\delta$ is sufficiently close to 1.
    \end{theorem}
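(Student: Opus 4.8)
The plan is to verify the subgame-perfect equilibrium condition through the one-shot deviation principle \cite[Theorem 4.2]{Fundenberg-1991}, treating separately the three kinds of one-slot deviations available to a single operator (say operator~1, while operator~2 conforms): deviations in the punishment state, \emph{detectable} deviations in the cooperation state (using a different bandwidth than the reported profile dictates), and \emph{undetectable} deviations (misreporting the traffic intensity). The first two are handled exactly as in the static case. In the punishment state the minimax payoff is $\pi_f(\lambda)$ (Lemma~\ref{flat-psd}), so a deviation there only postpones the return to the more profitable cooperation state and is unprofitable. A detectable deviation in cooperation triggers $T$ slots of full-spectrum punishment; its one-slot gain is at most $\overline{U}(\lambda)-\pi(w-\Delta,\lambda)$, while the foregone surplus over the punishment window grows linearly in $T$, so choosing $T$ as in \eqref{eq_punish_n} (with the dynamic cooperation value in place of $u_o^i$) and then $\delta$ near $1$ makes it unprofitable, just as in Lemmas~\ref{folk-thm-desired}--\ref{folk_thm_n}.

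The substance of the proof is the undetectable (misreporting) deviation, which triggers no punishment and so must be deterred purely through the continuation value carried by the balance. Since every executed trade moves $b^1$ and $b^2$ by $\mp\Delta$ in lockstep, the identity $b_t^1=-b_t^2$ is preserved under every history (including operator~1's lies, which still produce a consistent paired trade), so operator~1's continuation can be written as a single-variable value $V(b)$ on the finite state $b\in\{-k\Delta,\dots,k\Delta\}$ under truthful conforming play. I would then enumerate the four true traffic realizations $(\lambda^1,\lambda^2)$ and, in each, compare the truthful report against the profitable lie. Each comparison reduces to a one-slot inequality of the form ``immediate utility difference $\geq \delta[V(b')-V(b)]$''; writing $g_H=\pi(w+\Delta,1)-\pi(w,1)$ and $c_L=\pi(w,0)-\pi(w-\Delta,0)$, the binding cases are the mixed states: truthful borrowing in $(\lambda^1,\lambda^2)=(1,0)$ requires $\delta[V(b)-V(b-\Delta)]\le(1-\delta)g_H$, and truthful lending in $(0,1)$ requires $\delta[V(b+\Delta)-V(b)]\ge(1-\delta)c_L$, while the two same-traffic lies reduce, via strict concavity \eqref{property_a}, to these same two increment bounds.

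The key step---and the main obstacle---is controlling the increments $V(b+\Delta)-V(b)$. I would do this by a coupling argument: run two copies of the balance chain under a common traffic sequence, started one step apart at $b+\Delta$ and $b$. As long as both copies can execute the reported trade they use identical bandwidths and earn identical per-slot utilities while keeping the offset $\Delta$; the copies can diverge in exactly one slot, namely the first time the lower copy is blocked at $-\bar b$ on a borrow slot (the higher copy then earns an extra $g_H$) or the higher copy is blocked at $\bar b$ on a lend slot (it earns an extra $c_L$), after which the two copies merge and stay identical forever. Hence $V(b+\Delta)-V(b)=(1-\delta)\,\Exp[\delta^{\tau}X]$ with $X\in\{g_H,c_L\}$ and $\tau$ the (a.s.\ finite) merging slot. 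Feeding this into the incentive inequalities, the bound $X\le g_H$ settles the borrowing condition for every $\delta$, while the lending condition becomes $\Exp[\delta^{\tau+1}X]\ge c_L$, which holds for $\delta$ close to $1$ once $\Exp[X]>c_L$.

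It remains to guarantee $\Exp[X]>c_L$, and this is where the remaining hypotheses enter. The assumptions $\Prob(\Lambda_t^1=1,\Lambda_t^2=0)>0$ and $\Prob(\Lambda_t^1=0,\Lambda_t^2=1)>0$ make both boundary events reachable from any interior state with positive probability, so $X$ equals $g_H$ with positive probability and $\Exp[X]$ is a strict convex combination of $g_H$ and $c_L$; therefore $\Exp[X]>c_L$ provided $g_H>c_L$. Finally, $g_H>c_L$ is precisely the statement that shifting $\Delta$ of spectrum from the low-traffic to the high-traffic operator raises total surplus, and it holds for $\Delta$ small because strict supermodularity \eqref{property_b} makes the marginal value of spectrum strictly larger at high traffic---an order-$\Delta$ effect that dominates the order-$\Delta^2$ curvature correction from concavity \eqref{property_a}; this fixes the required $\Delta$. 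Choosing $\Delta$ this small, then $T$ as above, then $\delta$ near $1$, completes the verification.
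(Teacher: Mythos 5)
Your proposal is correct and matches the paper's own proof in all essentials: both invoke the one-shot deviation principle, split deviations into detectable ones (deterred by a sufficiently long punishment $T$, with deviation in the punishment state handled by the minimax argument) and undetectable misreports (deterred by the balance constraint forcing exactly one future correction slot), and both hinge on choosing $\Delta$ small enough that supermodularity yields $\pi(w+\Delta,1)-\pi(w,1)>\pi(w,0)-\pi(w-\Delta,0)$ despite the concavity correction, with the positivity of $\Prob(\Lambda_t^1=1,\Lambda_t^2=0)$ and $\Prob(\Lambda_t^1=0,\Lambda_t^2=1)$ guaranteeing the correction slot occurs and hits the favorable boundary with positive probability. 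The only differences are presentational: you package the future loss as a coupling bound on the increment $V(b+\Delta)-V(b)=(1-\delta)\Exp[\delta^{\tau}X]$, where the paper writes the same quantity via first-passage probabilities $p_\tau,q_\tau$ and justifies the $\delta\to 1$ interchange by uniform convergence, and in the detectable case you omit (harmlessly, since the paper bounds it by $2k(\pi(w+\Delta,1)-\pi(w,1))$, a constant dominated by the loss growing linearly in $T$) the continuation-value offset $\delta^{T+1}\left(V(b_{t+1})-V(b_{t+T+1})\right)$ that the paper tracks explicitly.
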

    To prove Theorem 2, we shall verify the three sufficient conditions for achieving perfect Bayesian equilibria \cite{Watson-2016}, namely, plain consistency ($\pmb{\mu}$ must follow a consistent conditional-probability updating), sequential rationality (at each information set, the operator cannot gain by deviating in one slot and thereafter conforming), and that $\pmb{\mu}$ must conform to Profile~\ref{mech_1}. We relegate the detailed proof of Theorem~\ref{thm_dynamic} to Appendix \ref{appandix:Thm2}.

    \subsection{The $n$-Operator Case}

    The sharing scheme introduced in Section~\ref{Dynamic Sharing Schemes_2op} can be easily extended to the case of multiple operators. A trading policy needs to be set up, so that the borrowers know who the respective lenders are. Assume operators have perfect recall of trading history. This implies that each operator knows the current balances of all operators.

    We will discuss a scheme for $n$ operators sharing $W$ Hz unlicensed spectrum.
    Denote the set of operators who report a high traffic intensity and have balance no less than $-\overline{b}+\Delta$ as $\mathcal{A}_1$, and the set of operators who report a low traffic intensity and have balance no greater than $\overline{b}-\Delta$ as $\mathcal{A}_0$. The trading policy $\mathcal{Q}$ is that the operator with the $i$-th largest balance in $\mathcal{A}_1$ borrows $\Delta$ Hz from the operator with the $i$-th smallest balance in $\mathcal{A}_0$, for any $i \leq \min\{|\mathcal{A}_1|, |\mathcal{A}_0|\}$.
    This pairing and subsequent trade is one possible scheme. In general, we can map the current balances and reported traffic conditions to a desired spectrum allocation, where the key findings remain the same.

    \begin{profile} \label{mech_2}
    Let the system start from the cooperation state with $b_0^i=0$, for all $i$. Operator $i$'s strategy in slot $t$ depends on the state as follows:
    \begin{itemize}
    \item[I.] Cooperation state:
    \begin{itemize}
    \item[a.] Reveal the traffic intensity $\lambda_t^i$ and learn traffic intensities from other operators;
    \item[b.] If chosen to trade by the trading policy $\mathcal{Q}$ and $\lambda_t^i=1$, then use $w+\Delta$ Hz for transmission and let $b_{t+1}^i=b_t^i-\Delta$; if chosen to trade and $\lambda_t^i=0$, then use $w-\Delta$ Hz for transmission and let $b_{t+1}^i=b_t^i+\Delta$;
        if not chosen, let $b_{t+1}^i=b_t^i$ and use $w$ Hz for transmission.
    \end{itemize}
    \item[II.] Punishment state: If any operator is detected to deviate from this profile in any state in the previous slot, then use the maximum power over the entire spectrum for $T$ slots and then return to the cooperation state. The balances remain unchanged during the punishment state.
    \end{itemize}
    \end{profile}

   Provided that Profile~\ref{mech_2} is used,  we construct the proper system of beliefs $\pmb{\mu}$ in the similar way as the two-operator case.

    \begin{theorem} \label{lemma_NOp_twoLev}
    There exist $\Delta>0$ and $T$ such that Profile~\ref{mech_2} along with the system of beliefs $\pmb{\mu}$ is a perfect Bayesian equilibrium, as long as \eqref{utility_property}, \eqref{property_a} and \eqref{property_b} hold, the future discount factor $\delta$ is sufficiently close to 1 and for every $i$, there exists $j$ such that $\Prob(\Lambda_t^i>\Lambda_t^j)>0$.
    \end{theorem}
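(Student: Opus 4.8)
The plan is to apply the one-shot deviation principle \cite[Theorem 4.2]{Fundenberg-1991}, exactly as in the two-operator Theorem~\ref{thm_dynamic}, and to show that for any single operator $i$ no single-slot deviation---detectable or undetectable---followed by a return to Profile~\ref{mech_2} is profitable, in either the cooperation or the punishment state. Detectable deviations (transmitting over a band other than the one prescribed by the reports and the policy $\mathcal{Q}$) are handled as in the static Lemma~\ref{folk_thm_n}: such a deviation is observed in the next slot and triggers $T$ slots of full-spectrum operation, during which Lemma~\ref{flat-psd} confines $i$ to its minimax payoff $\pi_f(\lambda)$. The one-slot gain is at most $\overline{U}(\lambda)-\pi(w,\lambda)$, whereas the punishment forfeits $T(u_o^i-u_f^i)$ with $u_o^i>u_f^i$ by Lemma~\ref{Orth_vs_Full}; picking $T$ to satisfy the analogue of \eqref{eq_punish_n} and letting $\delta\to1$ renders the net effect strictly negative. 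The punishment phase is itself subgame perfect because full-spectrum play is the unique stage-game equilibrium (Lemma~\ref{flat-psd}), so the threat is credible.

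The substance is ruling out undetectable (misreporting) deviations, and the key observation is that operator $i$'s payoff depends only on its own realized bandwidth and its own balance path, not on the identity of its trading partner. From $i$'s viewpoint the policy $\mathcal{Q}$ therefore collapses to the binary question of whether its report makes it borrow $\Delta$, lend $\Delta$, or neither, and how $b_t^i$ updates---structurally the same object analyzed for two operators. I would encode the continuation payoff under truthful play by all operators as a value function $V^i(b)$ on the balance state, and use strict concavity \eqref{property_a} to show $V^i$ is increasing and concave, so that balance is a store of future borrowing value with diminishing returns; the hypothesis that each $i$ admits some $j$ with $\Prob(\Lambda_t^i>\Lambda_t^j)>0$ guarantees that trade arises with positive probability and hence that $V^i$ is \emph{strictly} increasing.

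Truthfulness then reduces to two marginal comparisons. When $i$ truly has low traffic it lends under Profile~\ref{mech_2} and would instead borrow by falsely reporting high, so truth is preferred precisely when
\[
\pi(w+\Delta,0)-\pi(w-\Delta,0) < \delta\bigl(V^i(b+\Delta)-V^i(b-\Delta)\bigr);
\]
when $i$ truly has high traffic it borrows and would lend by falsely reporting low, so truth is preferred precisely when
\[
\delta\bigl(V^i(b+\Delta)-V^i(b-\Delta)\bigr) < \pi(w+\Delta,1)-\pi(w-\Delta,1).
\]
Both hold simultaneously iff the discounted marginal value of a balance unit lies strictly inside the window \(\bigl(\pi(w+\Delta,0)-\pi(w-\Delta,0),\ \pi(w+\Delta,1)-\pi(w-\Delta,1)\bigr)\), which is nonempty exactly because strict supermodularity \eqref{property_b} makes incremental spectrum strictly more valuable at high traffic. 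A report that leaves $i$'s realized trade unchanged (for instance when the opposite set is empty or $i$'s balance sits at a boundary $\pm\overline{b}$) changes nothing, so it suffices to rule out trade-flipping lies, for which the two displayed inequalities apply.

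I expect the principal difficulty to be establishing the squeeze \emph{uniformly}: the window and the value function $V^i$ both depend on $\Delta$, and the marginal value of $V^i$ varies with $b$ and is distorted near the boundaries $\pm\overline{b}$, where borrowing or lending is blocked. The task is thus to exhibit a single pair $(\Delta,T)$ and a threshold on $\delta$ for which $\delta(V^i(b+\Delta)-V^i(b-\Delta))$ remains inside the supermodularity window for every admissible balance $b$ and every configuration of the other operators' reports and balances. I would do this by taking $\Delta$ small so that all three quantities scale together and the strict supermodularity gap dominates the residual variation of $V^i$ across $b$, then fixing $T$ by the detectable-deviation bound above and sending $\delta\to1$. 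Verifying that the concavity and strict monotonicity of $V^i$ survive the coupling induced by $\mathcal{Q}$---so that no operator can manipulate the matching to escape the window---is the step that carries the real weight of the argument.
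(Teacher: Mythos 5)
Your overall architecture matches the paper's: one-shot deviation principle, detectable deviations deterred by a $T$-slot full-spectrum punishment exactly as in the static case, and the real work concentrated on misreporting, with the correct identification of the $n$-operator novelty that a lie can swing an operator's realized trade by up to $2\Delta$ (from lending to borrowing). Where you diverge is in how the future cost of that lie is accounted. You encode it as $\delta\bigl(V^i(b+\Delta)-V^i(b-\Delta)\bigr)$ and ask that this discounted marginal value of balance sit inside the supermodularity window; the paper instead conditions on the events $E$ (would lend if truthful) and $F$ (would borrow if lying), notes that a $2\Delta$ balance discrepancy must be repaid $\Delta$ at a time at the first two future slots where the boundary binds differently, lower-bounds each repayment by $\pi(w,0)-\pi(w-\Delta,0)$ as $\delta\to1$ (using the choice of $\Delta$ for which $\pi(w,0)-\pi(w-\Delta,0)<\pi(w+\Delta,1)-\pi(w,1)$), and then invokes concavity \eqref{property_a} to get $2\bigl(\pi(w,0)-\pi(w-\Delta,0)\bigr)>\pi(w+\Delta,0)-\pi(w-\Delta,0)$. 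The two accountings are essentially equivalent, but the paper's version needs only a uniform two-sided bound on $V(b+\Delta)-V(b)$, not any structural property of $V$.

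Two concrete soft spots in your write-up. First, your case split into ``trade unchanged'' versus ``trade-flipping'' omits the one-sided events $F\cap\overline{E}$ and $\overline{F}\cap E$, where the lie changes the realized bandwidth by only $\Delta$ (e.g., you get chosen to borrow when truth would have left you untraded). These are not ``nothing,'' and your displayed $2\Delta$ inequalities do not literally cover them; they are the easy, two-operator-like cases, but they must be listed, as the paper does in \eqref{eq_G02_3}. Second, you lean on $V^i$ being \emph{concave} in $b$, which you do not prove and which does not follow in any obvious way from \eqref{property_a} alone (the increment $V(b+\Delta)-V(b)$ is a mixture over which balance boundary is hit first, with hitting probabilities depending on $b$). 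Fortunately concavity of $V^i$ is not needed: the uniform bounds $\lim_{\delta\to1}\bigl(V(b+\Delta)-V(b)\bigr)\ge\pi(w,0)-\pi(w-\Delta,0)$ and $V(b+\Delta)-V(b)\le\pi(w+\Delta,1)-\pi(w,1)$, combined with concavity of $\pi$ itself, already place $V(b+\Delta)-V(b-\Delta)$ strictly inside your window for every admissible $b$. Replacing the concavity claim with these bounds closes the gap you yourself flag at the end and brings your argument into line with the paper's.
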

    Theorem \ref{lemma_NOp_twoLev} is proved in Appendix \ref{appandix:Thm3}.

    In this section, we have shown the existence of perfect Bayesian equilibria with dynamic spectrum sharing.
    Profiles~\ref{mech_1} and \ref{mech_2} constitute perfect Bayesian equilibria with truthful reporting of traffic intensities when the future discount is sufficiently close to 1. Dynamic sharing schemes can achieve substantial improvement in spectral efficiency in comparison with static sharing schemes.

\section{Numerical Results} \label{Ch:unlicensed_static SE:Numerical Results}

    In this section, some numerical results are presented for the proposed schemes. Let ${r(\gamma)=\log_2(1+\gamma)}$, $\delta=0.99$, and $W=100$ MHz.

    First we show some numerical results for the entry game. Assume $\pi^i(x,\lambda)= \lambda r(P) x$, $P =100$ (i.e., 20 dB), and $\Exp \Lambda^i_t=\frac{1}{2}$ for every $t$ and $i$. Here the maximum number of active operators $n^*$ is the integer that satisfies
    \begin{align}
    50\log_2\left(1+\frac{100}{100(n^*-1)+1} \right) \geq c  \geq 50\log_2\left(1+\frac{100}{100n^*+1} \right).
    \end{align}
    Fig.~\ref{fig:number_cost} shows the relationship between $n^*$ and the investment cost ($c$). The trend in Fig.~\ref{fig:number_cost} appears to be approximately exponential.
    More operators are willing to invest as the cost goes down. Evidently, if the entry cost is 0, there will be infinitely many operators, who all receive zero revenue.

    \begin{figure}
    \centering
    \includegraphics[angle=360,width=.6\columnwidth]{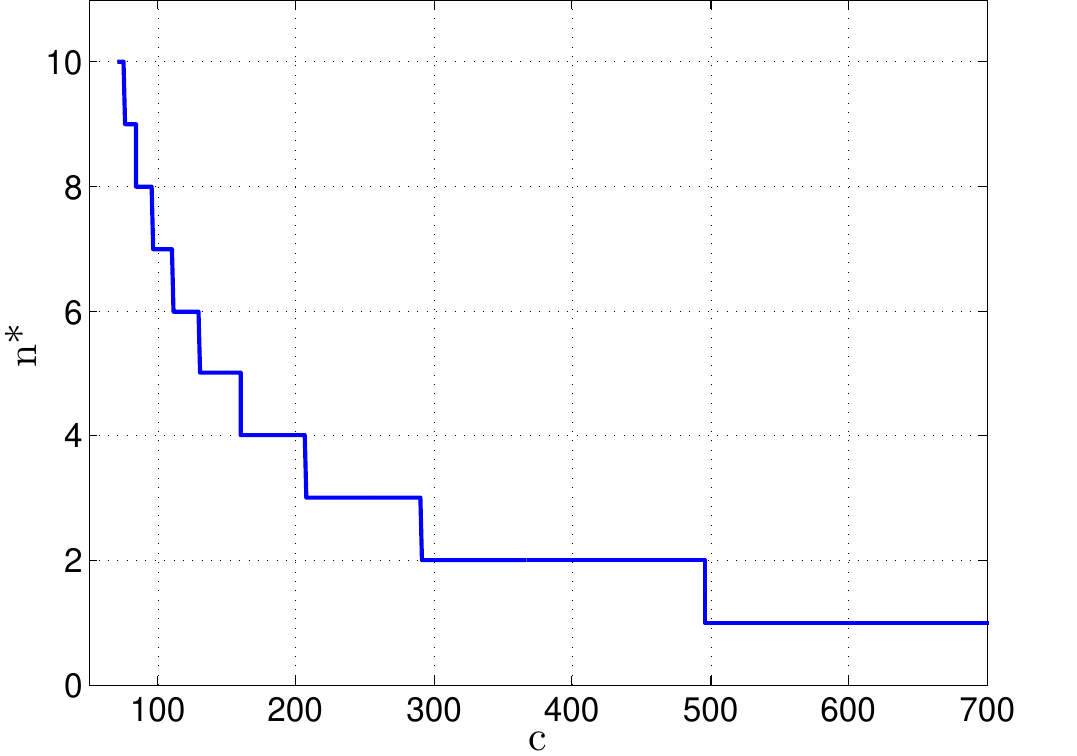}
    \caption{The maximum number of active operators for different investment costs in the entry game.}
    \label{fig:number_cost}
    \end{figure}

    \begin{figure}
    \centering
    \includegraphics[angle=360,width=.6\columnwidth]{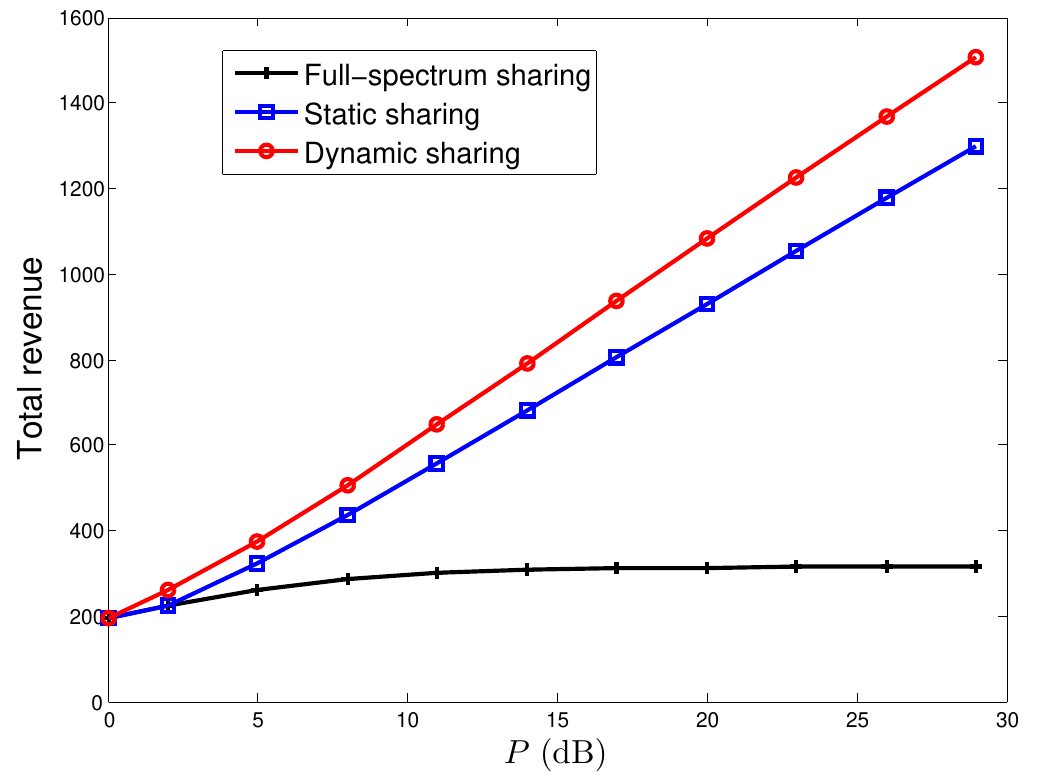}
    \caption{Total revenue for different sharing schemes: $\overline{b}=50$ MHz.}
    \label{fig:2operators_R}
    \end{figure}

    \begin{figure}
    \centering
    \includegraphics[angle=360,width=.6\columnwidth]{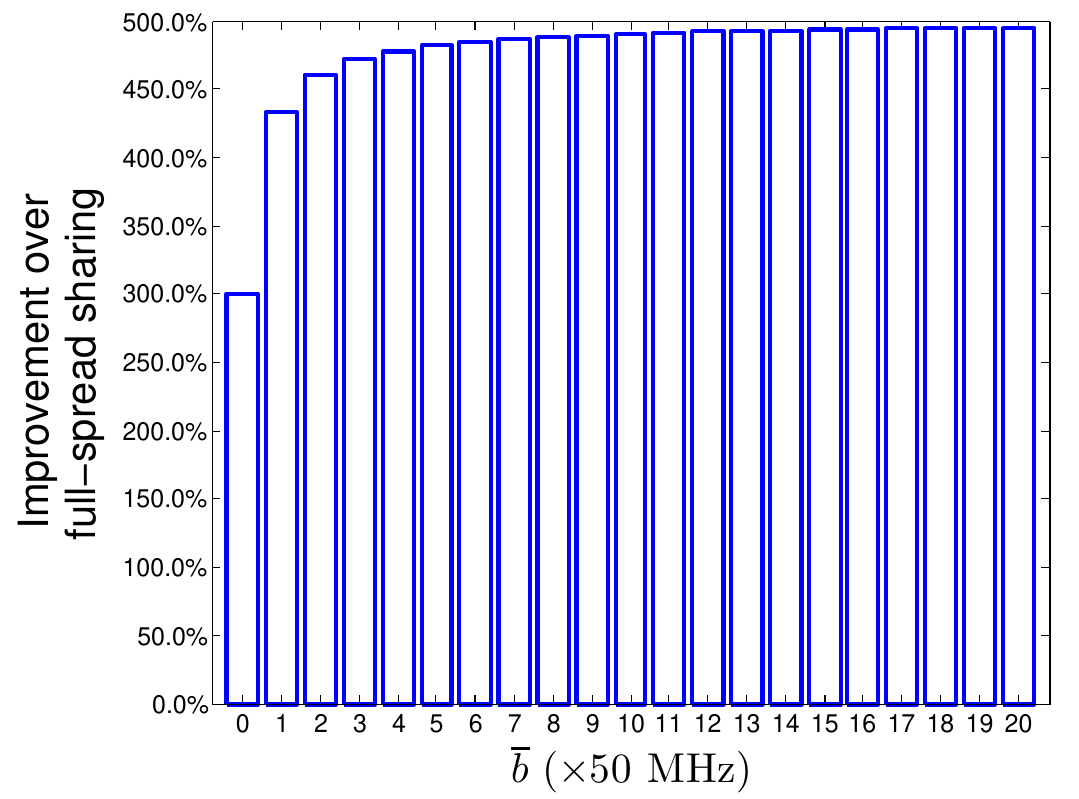}
    \caption{Improvement in total revenue by the proposed dynamic sharing for different balance limits: $\log_2(1+P)=8$.}
    \label{fig:2operators_B}
    \end{figure}

    Next, we present numerical results for the proposed static and dynamic schemes. We consider two operators and assume that the traffic intensity of an operator is either high $(\Lambda^i_t = 1)$ or low $(\Lambda^i_t = 0)$.
    Let $\pi^i(x,\lambda)= (24\lambda+1)^{0.5} (r(P)x)^{0.9}$, where $\pi^i(\cdot,\cdot)$ is a simple example of the Cobb-Douglas production function, which is widely used in economics to represent the technological relationship between the amount of output and the amounts of multiple inputs (traffic and spectrum here) \cite{Cobb-1928}.
    Assume that the traffic conditions of the two operators are independent and the probabilities of low traffic intensity for the two operators are $0.75$ and~$0.5$, respectively. The expected total revenues from full-spectrum sharing for the operators are $2\left(w\log_2\left(1+ \frac{P}{P + 1}\right) \right)^{0.9}$ and $3\left(w\log_2\left(1+ \frac{P}{P + 1}\right) \right)^{0.9}$; The expected total revenues in the proposed static scheme (Profile~\ref{mech_full_slot}) for the two operators are $2\left(w\log_2\left(1+ P\right) \right)^{0.9}$ and $3\left(w\log_2\left(1+ P\right) \right)^{0.9}$. In the proposed dynamic scheme (Profile~\ref{mech_1}), $\Delta$ is chosen to maximize the sum revenue of both operators and let Profile~\ref{mech_1} be a perfect Bayesian equilibrium.

    We compare the proposed static and dynamic schemes with full-spectrum sharing in Fig.~\ref{fig:2operators_R}. The proposed sharing schemes outperform full-spectrum sharing dramatically. As $\gamma_o$ goes up, the gain of the proposed sharing schemes increases. The proposed static scheme is better than full-spectrum sharing when $P >2.1$ dB, and the proposed dynamic scheme offers additional gain over the static sharing. When $P = 30$ dB, the proposed dynamic scheme has $400\%$ improvement over full-spectrum sharing and $16\%$ improvement over the proposed static scheme.
    Fig.~\ref{fig:2operators_B} shows the improvement from the proposed dynamic scheme over full-spread sharing under different balance limits. As $\overline{b}$ goes up, the gain approaches $500\%$ when $\log_2(1+P)=8$. Even if $\overline{b}$ is small, the proposed dynamic scheme still provides a significant improvement.

\section{Conclusion} \label{Conclusion}

    In this paper, we have studied unlicensed spectrum sharing by multiple strategic operators in a game theoretic framework.
    A static sharing scheme was first proposed for operators to share the spectrum in a given area and was shown to reach a subgame perfect Nash equilibrium. It has also been shown that the number of strategic operators willing to invest is limited due to entry barriers and externalities.
    A dynamic scheme for trading bandwidth has also been proposed, where operators with low traffic loads lend spectrum to those with high traffic loads, subject to a cumulative balance constraint on loaned bandwidth, which induces truthful reporting. Numerical results show that the proposed schemes can provide a substantial increase in spectral efficiency relative to full (uncoordinated) spectrum sharing. It is worth noting that a credible punishment scheme is in general necessary to deter deviation from the cooperation state.

 \section*{Acknowledgement}

    The authors thank the reviewers for their constructive comments. The authors also thank Dr. Weimin Xiao and Dr. Jialing Liu for useful discussions through the course of this project.

\appendices
\section{Proof of Theorem \ref{thm_dynamic}} \label{appandix:Thm2}

    According to \cite{Watson-2016}, Profile~\ref{mech_1} along with the system of beliefs $\pmb{\mu}$ is a perfect Bayesian equilibrium if Profile~\ref{mech_1} and $\pmb{\mu}$ satisfy the plain consistency conditions, the conditions that the operators' beliefs at the beginning of the game conform to the behavior strategy of Profile \ref{mech_1}, and that the strategy profile is sequentially rational given $\pmb{\mu}$.
    From the definition of $\pmb{\mu}$, the belief at the beginning of the game constructed based on Profile~\ref{mech_1}, so it conforms to the behavior strategy directly. According to operator $i$'s belief at its information set $h\in \mathcal{H}$, operator~$i$ believes that the actions at different information sets are independent. Also, from the definition of $\pmb{\mu}$, the strategy at any unobserved or off-path information set does not depend on the observations. Therefore, it is straightforward that the proposed system of beliefs $\pmb{\mu}$ satisfies the conditions of plain consistency in \cite[Definition~7]{Watson-2016}. Therefore, to prove this theorem, it is sufficient to show that there exist $\Delta>0$, and $T$ such that Profile~\ref{mech_1} is sequentially rational given $\pmb{\mu}$, as long as the conditions in the theorem hold.

    This notion of sequential rationality is defined in terms of what is commonly called ``one-shot deviations'', meaning that we evaluate operator $i$'s rationality at a given information set by looking at alternative choices only at this information set.
    In particular, the strategy profile is sequentially rational given $\pmb{\mu}$ if for every $h\in \mathcal{H}$, provided that information set $h$ is reached and operator $i$ moves at $h$, any one-shot deviation at $h$ results in no gain for operator $i$ given $\pmb{\mu}$.
    Because the total revenue is the sum of utilities in all slots, and the belief at each information set is concentrated on the strategy profiles described in Profile~\ref{mech_1}, it is equivalent to show that any one-shot deviation of Profile~\ref{mech_1} results in no gain if it happens on the path of the profile.
    The remainder of the proof consists of two parts, which address undetectable and detectable deviations, respectively.

    1) In this part, we discuss the one-shot deviation of reporting a false traffic intensity, which is not directly detectable. After the deviation, both operators conform to Profile~\ref{mech_1}.
    The subsequent actions of both operators are functions of the balance and the random traffic conditions.
    The key to the proof is to recognize that, because it is the most beneficial for an operator to borrow (resp., lend) when spectrum is most (resp., least) needed, an operator cannot gain by unilaterally lying about its own traffic. 

    Without loss of generality, suppose operator 1 lies about its traffic intensity, whereas operator~2 reveals its true traffic intensity. Throughout the proof, we analyze the case where operator 1 deviates, so the superscript will be dropped when it is for operator 1 hereafter.
    The traffic loads $(\lambda_t,\lambda^2_t)$ have four possible realizations: (0,0), (0,1), (1,0), and (1,1). We discuss the case of $\lambda_t=\lambda^2_t=0$ in detail. The other three cases are similar and hence are omitted.
    If $b_t\geq -\overline{b}+\Delta$ and operator 1 lies to report high traffic, then operator 1 borrows spectrum to use bandwidth $w+\Delta$; otherwise, operator 1 uses  bandwidth $w$.
    Hence, when operator 1 lies to report high traffic in slot $t$, operator~1 may benefit by using $\Delta$ more bandwidth.
    The one-slot gain of operator 1 by lying is
    \begin{align} \label{eq_G02_1}
      \begin{split}
      G
      =\begin{cases}
        \pi(w+\Delta, 0)- \pi(w,0),  &\text{ if } b_t\geq -\overline{b}+\Delta \\ 
        0, &\text{ otherwise.}
              \end{cases}
      \end{split}
    \end{align}

    We develop a detailed proof with full justification (especially of the change of limits).
    If operator 1 does not use more spectrum from lying, i.e., $G = 0$, the deviating case is identical to the truth-telling case, so operator 1 does not gain by lying in this case. Hence we assume otherwise subsequently. If operator 1 uses more spectrum by lying, then due to the balance constraint, there will be one slot in the future, denoted as $t^*$, when operator 1 either borrows~$\Delta$ less (because it first hits the minimum balance) or lends $\Delta$ more (because it first hits the maximum balance) in comparison with the truth-telling case.
    The actions after slot $t^*$ are the same as in the truth-telling case.

    Let $p_{\tau} $ (resp., $q_{\tau}$) denote the probability that the first time operator 1 borrows less (resp., lends more) than the truth-telling case is $\tau$ slots after deviation. Clearly, ${\sum_{{\tau}=1}^\infty (p_{\tau} +q_{\tau} )=1}$. Note that $p_{\tau}$ and $q_{\tau}$ implicitly depend on the balance at the time of deviation.
    Since $\Prob(\Lambda_t=0,\Lambda_t^2=1)>0$ and $\Prob(\Lambda_t=1,\Lambda_t^2=0)>0$,
    we have $\sum_{\tau =1 }^{\infty}  p_{\tau} >0$ and $\sum_{\tau =1 }^{\infty} q_{\tau} >0$.
    Define the expected loss at the slot with $\tau$ slots after deviation as
    \begin{align} \label{eq:thm_2op_1}
    m_{\tau}=p_{\tau} (\pi(w+\Delta, 1) - \pi(w, 1))+q_{\tau} \left(\pi(w, 0)- \pi\left(w-\Delta, 0\right)\right).
    \end{align}
    By assumption, $\pi(\cdot,\cdot)$ is increasing in the first argument, so $m_{\tau}>0$. The total expected loss in the future is
    \begin{align} \label{eq:expected_loss}
      \begin{split}
      L  = \sum_{\tau =1 }^{\infty} \delta^{{\tau}} m_{\tau}.
      \end{split}
    \end{align}

    We show that $L > G$ if $\delta$ is sufficiently close to 1, i.e., the loss exceeds the one-slot gain. By~\eqref{property_b} and the continuity of $\pi$,
    \begin{align}
      \pi(w, 0) - \pi(w-\Delta, 0)
      < & \ \pi(w, 1) - \pi(w-\Delta, 1) \\
      = & \ \pi(w+\Delta, 1)- \pi(w, 1) + o(\Delta).
    \end{align}
    Hence there must exist $\Delta>0$ which satisfies
    \begin{align}\label{ineq_1}
      \pi(w, 0)- \pi(w-\Delta, 0)< \pi(w+\Delta, 1) - \pi(w, 1).
    \end{align}
    Then, $\sum_{{\tau}=1}^{\infty}m_{\tau}$ is upper bounded, since from \eqref{eq:thm_2op_1} and \eqref{ineq_1}
    \begin{align}
      \sum_{{\tau}=1}^{\infty}m_{\tau}
      < \ & \sum_{{\tau}=1}^{\infty}  \left(p_{\tau} +q_{\tau} \right) (\pi(w+\Delta, 1) - \pi(w, 1)) \\
    = \ & \pi(w+\Delta, 1) - \pi(w, 1).
    \end{align}
    Since $m_{\tau}\geq 0$, $\sum_{{\tau}=1}^{l}m_{\tau}$ is nondecreasing in $l$.
    Due to monotonicity and boundedness, $\sum_{{\tau}=1}^l m_{\tau}$ converges as $l$ goes to infinity. Also, $|\delta^{{\tau}}m_{\tau}|\leq m_{\tau}$ since $\delta\in [0,1)$. According to~\cite[Theorem~7.10]{Rudin-1976}, $\sum_{{\tau}=1}^l \delta^{\tau} m_{\tau}$ converges uniformly.
    Also, by \cite[Theorem 7.11]{Rudin-1976}, we have
    \begin{align}
      \lim_{\delta \rightarrow 1}  \lim_{l \rightarrow \infty} \sum_{{\tau}=1}^l \delta^{\tau} m_{\tau}
    = & \lim_{l \rightarrow \infty}\lim_{\delta \rightarrow 1}   \sum_{{\tau}=1}^l \delta^{\tau} m_{\tau}   \\
    = & \lim_{l \rightarrow \infty}  \sum_{{\tau}=1}^l m_{\tau}.  \label{eq:thm_2op_4}
    \end{align}
    Thus, by \eqref{eq:thm_2op_1}, \eqref{eq:expected_loss}, \eqref{ineq_1} and \eqref{eq:thm_2op_4},
    \begin{align}
      \lim_{\delta \rightarrow 1} L
      \begin{split}
      = \ & \sum_{\tau =1 }^{\infty} m_{\tau}
      \end{split}  \label{ineq_16}\\
      > \ & \sum_{\tau =1 }^{\infty} \left(p_{\tau} +q_{\tau} \right)\left(\pi(w, 0)- \pi\left(w-\Delta, 0\right)\right) \\
      = \ &  \pi(w, 0)- \pi\left(w-\Delta, 0\right). \label{ineq_15}
    \end{align}
    Because $L $ increases with $\delta$, there exists $\delta_0$, such that for every $\delta >\delta_0$,
    \begin{align}
      \label{eq:4}
      L>\pi(w, 0)- \pi\left(w-\Delta, 0\right).
    \end{align}
    Due to~\eqref{property_a}, \eqref{eq_G02_1} and \eqref{eq:4}, we have $L\geq G$.
    Therefore, if both operators have low traffic intensities, then operator 1 suffers a net loss in the revenue by reporting high traffic intensity.
    Using similar arguments, one can show that an operator always suffers a net loss by unilaterally lying in the traffic under all traffic conditions.

    2) In this part, we consider detectable one-shot deviation, which will trigger the punishment state.  We shall show that when the duration of punishment is sufficiently long, it wipes out more than the gain from one-shot deviation.

    Denote the maximum utility of operator 1 with traffic $\lambda$ in one slot as $\overline{U}(\lambda)$ given by~\eqref{eq_upper_utility}.
    Let the expected total revenue of operator 1 starting from balance $b$ without deviation be denoted as $V(b)$.
    Because it is a repeated game where the only memory in the system is the balance, $V(b)$ does not depend on time.

    Consider a detectable one-shot deviation, where operator 1 deviates in the cooperation state in slot $t$ and then conforms thereafter. Operator 1 receives at most $\overline{U}(\lambda_t)$ in the slot when operator~1 deviates, experiences the punishment for $T$ slots and achieves $V(b_{t+1})$ thereafter. The total revenue of operator 1 from deviation is upper-bounded by
    \begin{align}
    V_{dev}(b_t) = \overline{U}(\lambda_t)+ \sum_{\tau=t+1}^{t+T} \delta^{\tau-t} \pi_f(\lambda_\tau) + \delta^{T+1} V(b_{t+1}).
  \end{align}
    If each operator conforms to the profile, operator 1 obtains
    \begin{align} \label{eq:11}
    V(b_t)=
    \pi(w_t,\lambda_t)+ \sum_{\tau=t+1}^{t+T} \delta^{\tau-t} \pi(w_\tau, \lambda_\tau) + \delta^{T+1} V(b_{t+T+1})
  \end{align}
  where $w_t$ is the bandwidth of operator 1 in slot $t$ when both operators conform.
    The gain by the deviation is less than
    \begin{align} \label{eq_gain_deviation}
    \begin{split}
      V_{dev}(b_t)-V(b_t)
    = \ & \overline{U}(\lambda_t)-\pi(w_t,\lambda_t) + \delta^{T+1} \left(V(b_{t+1})-V(b_{t+T+1})\right)  \\
    &- \sum_{\tau=t+1}^{t+T} \delta^{\tau-t} (\pi (w_\tau, \lambda_\tau) - \pi_f(\lambda_\tau)).
    \end{split}
    \end{align}

    We first show that $\overline{U}(\lambda_t)-\pi(w_t,\lambda_t)$ and $\delta^{T+1} \left(V(b_{t+1})-V(b_{t+T+1})\right)$ in \eqref{eq_gain_deviation} are upper bounded. Since the utility function is finite, $\overline{U}(\lambda_t)-\pi(w_t,\lambda_t)$ is upper bounded, assuming the upper-bound is $z_1$. If both operators conform, operator 1's utilities starting from balance $b$ and $b+\Delta$ are only different at one slot in the future, denoted as $t^*$.
    Thus, 
    \begin{align} \label{eq:thm_2op_3}
      \begin{split}
       & V(b+\Delta)-V(b) \\
      = & \begin{cases}
        \delta^{t^*}(\pi(w+\Delta, 1)-\pi(w,1)), &\text{if it first hits the minimum balance}  \\
        \delta^{t^*}(\pi(w, 0)-\pi(w-\Delta,0)), &\text{if it first hits the maximum balance.}
        \end{cases}
      \end{split}
    \end{align}
    From \eqref{ineq_1}, \eqref{eq:thm_2op_3} and the fact that $\delta<1$,
    \begin{align} \label{eq:thm_2op_2}
    V(b+\Delta)-V(b) \leq \pi(w+\Delta, 1)-\pi(w,1).
    \end{align}
    From \eqref{eq:thm_2op_2} and the fact that $k= \frac{\overline{b}}{\Delta}$ and $\delta<1$,
     \begin{align}
     \delta^{T+1}\left(V(b_{t+1})-V(b_{t+T+1})\right) &\leq V(b_{t+1})-V(b_{t+T+1}) \\
     &\leq V(\overline{b})-V(-\overline{b}) \\
     & = \sum_{m=1}^{2k} \left(V(-\overline{b} + m\cdot \Delta)-V(-\overline{b}+(m-1)\cdot \Delta)\right) \\
     & \leq 2k(\pi(w+\Delta, 1)-\pi(w,1)).
     \end{align}
     Thus, $\delta^{T+1}\left(V(b_{t+1})-V(b_{t+T+1})\right)$ is upper bundered by $z_2=2k(\pi(w+\Delta, 1)-\pi(w,1))$.

     In the following, we show that $\sum_{\tau=t+1}^{t+T} \delta^{\tau-t} (\pi(w_\tau, \lambda_\tau)-\pi_f(\lambda_\tau))$ in \eqref{eq_gain_deviation} can be arbitrarily large for sufficiently large $T$ and $\delta$.
     By the assumption that $\pi(x,\lambda)$ is supermodular, for any $\xi>\lambda$,
    \begin{align}\label{eq:7}
    \pi(w,\xi) - \pi_f(\xi)> \pi(w,\lambda) - \pi_f(\lambda).
    \end{align}
    According to \eqref{property_a} and \eqref{eq:7}, we have
      \begin{align}
          \pi(w+\Delta, 1)-\pi_f(1)
       > & \ \pi(w, 1)-\pi_f(1) \\
       > & \ \pi(w, 0)-\pi_f(0)  \\
       > & \ \pi(w-\Delta, 0)-\pi_f(0). \label{ineq_14}
      \end{align}
      From Lemma~\ref{Orth_vs_Full}, there exist $\Delta>0$ and $z_3>0$ such that the right side of \eqref{ineq_14} is bigger than~$z_3$.
      Since $\pi(w_\tau, \lambda_\tau)$ takes value from $\{\pi(w+\Delta, 1), \pi(w, 1), \pi(w, 0), \pi(w-\Delta, 0)\}$ when both operators conform,
      \begin{align}
      \pi(w_\tau, \lambda_\tau) - \pi_f(\lambda_\tau)  \geq z_3.
      \end{align}
      There exists $T$ such that $ z_1 +z_2 - z_3T <0$,  i.e., the loss by punishment is larger than the gain by one-shot deviation.

      Therefore, if $T$ is large enough and $\delta$ is sufficiently close to 1, the detectable one-shot deviation in the cooperation state results in a lower revenue.

    If operator $1$ deviates in the punishment state in slot $t$, operator 1 obtains at most $\pi_f(\lambda_t)$ in the slot when operator 1 deviates, since $\pi_f(\lambda_t)$ is the min-max utility. Then the deviation only lengthens the punishment and postpones the larger utility in coordination state. Therefore, any detectable one-shot deviation is undesired.

    To summarize, if $\delta$ is sufficiently close to 1, any one-shot deviation by operator 1 is non-profitable.
    The same conclusion applies to operator 2 by symmetry.
    Therefore, there exist $\Delta>0$ and $T$, such that if $\delta$ is sufficiently close to 1, one-shot deviation on the path of the profile results in no gain. Profile~\ref{mech_1} is sequentially rational given $\pmb{\mu}$. Thus, Profile~\ref{mech_1} along with the system of beliefs $\pmb{\mu}$ is perfect Bayesian equilibrium.

\section{Proof of Theorem \ref{lemma_NOp_twoLev}} \label{appandix:Thm3}

    With the same arguments as in the proof of Theorem \ref{thm_dynamic}, the belief at the initial information set conforms to the behavior strategy of Profile~\ref{mech_2} for each operator and the proposed system of belief $\pmb{\mu}$ satisfies the conditions of plain consistency in \cite[Definition 7]{Watson-2016}.

    In order to prove the theorem, it is sufficient to show that there exist $\Delta>0$, and $T$ such that Profile~\ref{mech_2} is sequentially rational given $\pmb{\mu}$, as long as the conditions in Theorem~\ref{lemma_NOp_twoLev} hold. Similarly, it is equivalent to show that any one-shot deviation on the path of Profile~\ref{mech_2} results in no gain.
    Both undetectable and detectable deviations need to be addressed. The proof for the case of detectable deviation is essentially the same as that in Theorem~\ref{thm_dynamic}. Hence, here we only discuss the one-shot deviation of reporting a false traffic intensity, which is undetectable. 

    For the $2$-operator case, an operator can access at most $\Delta$ more spectrum in one slot by lying in comparison with truthful reporting. However, for the $n$-operator case, it is possible for an operator to be chosen for trade no matter what the operator reports. If an operator is chosen to borrow (lend), the operator uses $\Delta$ Hz more (less) spectrum. Thus, an operator can access up to $2\Delta$ Hz more spectrum in one slot by lying.

    Without loss of generality, we analyze the case where operator 1 deviates, so the superscript will be dropped when it is for operator 1.
    Consider the event where the traffic intensity of operator $1$ is low. Denote the event that operator 1 would be chosen as a borrower if it lies to report a high traffic intensity by $F$. Denote the event that operator $1$ would be chosen as a lender if it tells the truth by $E$.
    When operator $1$ lies to report a high traffic intensity, operator $1$ may benefit by using more bandwidth in comparison with the truth-telling case. The gain by lying to report high traffic intensity is expressed as follows depending on the sub-events:
    \begin{align} \label{eq_G02_3}
    \begin{split}
      G
      =\begin{cases}
        \pi(w+\Delta, 0)- \pi(w,0), &\text{ if $F\cap \overline{ E}$} \\
        \pi(w, 0)- \pi\left(w- \Delta,0\right),      &\text{ if $\overline{F}\cap E$} \\
        \pi(w+\Delta, 0)- \pi(w-\Delta,0), &\text{ if $F\cap E$}\\  
        0, &\text{ if $\overline{F} \cap  \overline{E}$.}
              \end{cases}
    \end{split}
    \end{align}

    All operators conform to Profile~\ref{mech_2} after the one-shot deviation.
    Similar as in the proof of Theorem~\ref{thm_dynamic}, due to the balance constraint, if operator $1$ uses more bandwidth by lying, operator~$1$ either borrows less (because it first hits the minimum balance) or lends more (because it first hits the maximum balance) in the future in comparison with the truth-telling case.

    It suffices to show that an operator cannot gain by lying about its own traffic for different cases in \eqref{eq_G02_3}.
    The proofs of the cases of $\overline{F}\cap E$ and $F\cap \overline{E}$ are similar to that of Theorem~\ref{thm_dynamic}. In the case of $\overline{F} \cap  \overline{E}$, there is no change in spectrum use and revenue by lying. If $F \cap E$ occurs, operator~$1$ uses $2\Delta$~Hz more spectrum when the deviation occurs. According to Profile~\ref{mech_2}, operator $1$ will use $\Delta$~Hz less spectrum in two slots in the future. Using the analogous argument as in \eqref{eq:4} in Theorem~\ref{thm_dynamic}, we can show that in each of the two slots when the loss occurs, the expected one-shot loss is greater than $\pi(w, 0)- \pi\left(w-\Delta, 0\right)$. Thus, the expected total loss of the one-shot deviation is greater than
    \begin{align}
        2(\pi(w, 0)- \pi\left(w-\Delta, 0\right)).
    \end{align}
    From \eqref{property_a},
    \begin{align}
    2(\pi(w, 0)- \pi\left(w-\Delta, 0\right)) &> \pi(w+\Delta, 0)-\pi(w, 0) + \pi(w, 0)- \pi\left(w-\Delta, 0\right)  \\
    &>\pi(w+\Delta, 0)- \pi\left(w-\Delta, 0\right).
    \end{align}
    That is, the expected total loss is greater than the one-shot gain when $F\cap E$ happens.
    Thus, if operator $1$ has a low traffic intensity, to report a high traffic intensity brings no gain.

    The case where the traffic intensity of operator $1$ is high can be treated similarly, and hence is omitted.
    Thus, if $\delta$ is sufficiently close to 1, any one-shot undetectable deviation by operator~$1$ is non-profitable.

    To summarize, one-shot deviation on the path of the profile in no gain. Profile~\ref{mech_2} is sequentially rational given $\pmb{\mu}$. Therefore, Profile~\ref{mech_2} along with the system of beliefs $\pmb{\mu}$ is a perfect Bayesian equilibrium.

\bibliographystyle{IEEEtran}

\end{document}